%% file: main.tex
\PassOptionsToPackage{safe}{pbalance}
\documentclass[sigconf,nonacm,pbalance=true]{acmart}
\usepackage{amsmath}
\usepackage{booktabs}
\usepackage{multirow}
\usepackage{graphicx}
\usepackage{placeins}
\usepackage{needspace}
\usepackage{tikz}
\usetikzlibrary{arrows.meta}
\DeclareRobustCommand{\rev}[1]{#1}
\newenvironment{revision}{}{}
\AtBeginDocument{\hypersetup{hidelinks}}
\usepackage{enumitem}
\usepackage[ruled,vlined,linesnumbered]{algorithm2e}
\usepackage[nameinlink,noabbrev]{cleveref}
\Crefname{algocf}{Algorithm}{Algorithms}
\Crefname{AlgoLine}{Line}{Lines}
\crefname{algocf}{algorithm}{algorithms}
\crefname{AlgoLine}{line}{lines}
\Crefname{line}{Line}{Lines}
\AtEndPreamble{%
  \theoremstyle{acmplain}%
  \newtheorem{problem}[theorem]{Problem}}
\Crefname{problem}{Problem}{Problems}
\crefname{problem}{problem}{problems}
\Crefname{lemma}{Lemma}{Lemmas}
\Crefname{corollary}{Corollary}{Corollaries}
\makeatletter
\renewcommand{\nllabel}[1]{%
  \begingroup
  \def\@currentcounter{AlgoLine}%
  \protected@edef\cref@currentlabel{%
    [line][\arabic{AlgoLine}][\arabic{algocf}]\theAlgoLine}%
  \label[line]{#1}%
  \endgroup}
\makeatother
\SetKwInput{KwIn}{Input}
\SetKwInput{KwOut}{Output}
\setcopyright{none}
\acmConference[SIGMOD 2027]{SIGMOD Research Submission}{June 13--19, 2027}{Huntington Beach, CA, USA}
\acmYear{2027}
\acmDOI{}
\acmISBN{}
\newcommand{\method}{\texttt{CAP}}
\newsavebox{\maintablebox}
\input{sections/result_numbers}
\title{Scalable Approximate Algorithm for Dynamic Densest Subhypergraphs with Solution-Guided Maintenance}
\author{Jingbang Chen}
\authornote{Corresponding author.}
\affiliation{%
  \institution{CUHK-Shenzhen \& SLAI}
  \city{Shenzhen}
  \country{China}}
\author{Chenhao Ma}
\affiliation{%
  \institution{CUHK-Shenzhen}
  \city{Shenzhen}
  \country{China}}
\author{Yingli Zhou}
\affiliation{%
  \institution{CUHK-Shenzhen}
  \city{Shenzhen}
  \country{China}}
\renewcommand{\shortauthors}{Chen et al.}
\begin{document}
\begin{abstract}
\begin{revision}
Hypergraphs model interactions involving groups of entities, and finding
highly connected groups is a fundamental task in analyzing these data.
When interactions arrive and expire, maintaining a dense group can require
frequent and expensive changes to the underlying representation.
We propose \texttt{CAP}, a scalable algorithm that explicitly maintains a
$(1+\epsilon)$-approximate densest subhypergraph under hyperedge insertions
and deletions. The algorithm keeps a solution candidate together with an
endpoint allocation that bounds the optimum density.
The candidate's density guides maintenance: an update triggers repair only
when this bound is too large to establish the required approximation.
Local searches redistribute load or find a denser candidate, allowing
\texttt{CAP} to retain useful solutions across many updates.
We give an analysis that relates maintenance work to the size of the
regions searched and the density lost by the maintained solution.
It identifies conditions under which local repair remains inexpensive
and provides a theoretical explanation for the observed efficiency.
Experiments on real-world hypergraphs show speedups of up to nearly four
orders of magnitude over the evaluated dynamic baselines, while supporting
frequent queries and high accuracy. \texttt{CAP} also remains competitive
with specialized dynamic graph methods on ordinary graphs, with speedups
of up to approximately $64\times$ on the tested workloads.
\end{revision}
\end{abstract}
\begin{CCSXML}
<ccs2012><concept><concept_id>10002951.10003227.10003351</concept_id><concept_desc>Information systems~Data mining</concept_desc><concept_significance>500</concept_significance></concept><concept><concept_id>10003752.10003809.10003716.10011138</concept_id><concept_desc>Theory of computation~Dynamic graph algorithms</concept_desc><concept_significance>500</concept_significance></concept></ccs2012>
\end{CCSXML}
\ccsdesc[500]{Information systems~Data mining}
\ccsdesc[500]{Theory of computation~Dynamic graph algorithms}
\keywords{dynamic hypergraphs, densest subgraph, incremental maintenance, approximation certificates}
\maketitle
\input{sections/intro_related}
\input{sections/preliminaries}
\input{sections/cap_algorithm}

\input{sections/correctness}

\input{sections/work}

\input{sections/evaluation}

\FloatBarrier
\section{Conclusion}
\label{sec:conclusion}
Witness-conditioned repair gives a unified way to maintain a certified dense
set on graphs and hypergraphs. Its useful regime is described by certificate
slack, actual deletion-induced scale loss, and positive residual support.
The native-parameter experiments show substantial \rev{running-time} savings
against the released hypergraph implementation on the evaluated sources.
Static recomputation remains competitive at fine precision on smaller streams.
The ablation study shows that the combination of \rev{the larger outer capacity}
and reserve destinations reduces repair work on the evaluated hypergraphs.
\begin{acks}
The authors thank Sergey Kudria for early
discussions on the densest-subgraph problem in ordinary graphs.
\end{acks}

\section*{AI Acknowledgment}
The authors acknowledge the use of OpenAI GPT Astra for assistance with
literature search, algorithm and proof development, implementation,
experimental orchestration, analysis, and manuscript drafting.
The authors take full responsibility for the correctness of the mathematical
claims and proofs, the experimental results, and all other content of this paper.

\bibliographystyle{ACM-Reference-Format}
\begin{revision}
\bibliography{references,hypergraph_deep_refs}
\end{revision}
\input{appendix}

\end{document}

%% file: sections/result_numbers.tex
\newcommand{\MainSpeedup}{15.5--1869.2}
\newcommand{\NativeComparisonCells}{15}

%% file: sections/intro_related.tex
\section{Introduction}
\label{sec:cap-introduction}
\begin{revision}
A hypergraph represents relationships among groups of entities: its vertices
are entities, and each hyperedge connects all participants in one
interaction. Unlike an ordinary graph edge, a hyperedge can contain more
than two vertices. This representation occurs naturally in scientific
collaboration, where a paper joins its coauthors, and in online activity,
where an event associates a group of users or tags~\cite{Benson-2018-simplicial}.
Hypergraphs also model protein complexes~\cite{klamt2009}
and support learning from relationships among multiple
objects~\cite{feng2019hgnn}.
Keeping each group interaction intact preserves information that a
collection of pairwise links can lose, making hypergraphs a useful model
across social, biological, and information systems~\cite{battiston2020}.

A central task in such systems is to find a densely connected group.
In an ordinary graph, the densest-subgraph problem seeks a set of vertices
with the largest ratio of edges inside the set to the number of vertices
in it. This simple objective balances the number of interactions captured
against the size of the group. Densest-subgraph methods and their variants
are used to identify
communities~\cite{chen2012dense}, detect coordinated or anomalous
activity~\cite{hooi2016fraudar}, and identify regulatory motifs and
gene-annotation patterns~\cite{fratkin2006motifcut,saha2010restrictions,lanciano2024survey}.
The corresponding hypergraph problem counts hyperedges whose endpoints
all belong to the selected set. It finds groups participating in many
complete interactions, rather than merely many pairwise connections.
Algorithms for this objective include peeling, flow, and localized
methods~\cite{hd_cqtorres2022,hd_huang2024hyper,hd_bengali2026}.
In temporal data, however, interactions are continually added and removed,
so the dense group must be maintained as the hypergraph changes.
This dynamic setting has motivated algorithms for evolving
hypergraphs~\cite{hd_hu2017,hd_bera2022}.

Existing dynamic approaches maintain degree decompositions, orientations,
or endpoint allocations whose local conditions imply an approximation guarantee.
Hu et al.~\cite{hd_hu2017} give guarantees depending on hyperedge size;
Bera et al.~\cite{hd_bera2022} obtain a near-one approximation and provide
a practical implementation. Chekuri et al.~\cite{chekuri2024}
develop faster dynamic orientation techniques.
Their maintenance rules restore local conditions after updates.
Our key observation is that an update can violate such a condition while
the current solution still satisfies the requested approximation.
Retaining useful dense sets already helps ordinary-graph methods avoid
some recomputations~\cite{epasto2015,xu2024}.
For dynamic hypergraphs, we ask how the maintained solution can guide
allocation repair while preserving the required approximation after
every update.

We address this opportunity with \texttt{CAP}, a fully dynamic algorithm
for maintaining an explicit approximate densest subhypergraph.
It stores a solution candidate and distributes each hyperedge's contribution
among its endpoints. This allocation provides an upper bound on the
optimum density, while the candidate provides a lower bound---an
established relationship~\cite{hd_bera2022,hd_billig2025}.
After an insertion or deletion, \texttt{CAP} updates the affected
allocation and the candidate's density. If the two bounds remain close
enough, it keeps the candidate and finishes the update.
Otherwise, a local search either moves load to vertices with spare
capacity or identifies a denser set to use as the new candidate.
The solution is maintained explicitly, so a query can directly return
its vertices. Maintenance is solution-guided because the candidate's
current density guides both when repair is needed and when it can stop.

We also develop an analysis tailored to this maintenance policy.
A bound that charges every update for the whole hypergraph misses the
savings from retaining a useful solution and searching only a local region.
Our analysis separates the cost of an individual search from the amount
of repair caused by changes in solution density. For example, deleting
an edge outside the maintained solution does not lower its density and
cannot trigger an overload; when the solution loses only a small fraction
of its internal edges, its allowable load changes only slightly.
Together with bounds on the high-load region reached by a search, this
explains when the maintenance cost can be small even in a large hypergraph.
These conditions provide theoretical support for the strong performance
observed in our experiments.

Our contributions are summarized as follows.
\begin{itemize}[leftmargin=*]
  \item We propose \texttt{CAP}, a solution-guided algorithm that maintains
  an explicit $(1+\epsilon)$-approximate densest subhypergraph after every
  insertion or deletion. It supports hyperedges of different sizes and
  returns the maintained solution directly at queries.
  \item We analyze the work actually performed by local maintenance.
  The resulting bounds identify small high-load regions and limited
  decreases in solution density as conditions for inexpensive repair,
  explaining how the algorithm can benefit from the structure of a stream.
  \item We evaluate \texttt{CAP} against dynamic hypergraph baselines and
  exact snapshot solvers. It achieves speedups of up to approximately
  $5.9\times10^3$ over the dynamic baselines on completed comparisons,
  with studies of query frequency, data size, accuracy, and its maintenance
  policy. It also remains competitive with specialized methods on ordinary
  graphs, reaching speedups of approximately $64\times$.
\end{itemize}

\paragraph{Outline.}
\Cref{sec:cap-related} reviews related work, and \Cref{sec:cap-model}
defines the problem and notation. \Cref{sec:cap-method} presents
\texttt{CAP} and establishes its correctness.
\Cref{sec:hyper-support-candidate} analyzes solution-guided maintenance,
and \Cref{sec:evaluation} evaluates its efficiency and solution quality.
\Cref{sec:conclusion} concludes the paper.
\end{revision}

\section{Related Work}
\label{sec:cap-related}
\begin{revision}
\paragraph{Densest-subgraph discovery.}
Densest-subgraph discovery has a substantial literature spanning theory,
database algorithms, and applications, as covered by tutorials,
surveys, and comparative studies
\cite{gionis2015tutorial,fang2022tutorial,luo2023survey,lanciano2024survey,zhou2024analysis}.
Beyond Charikar's classical greedy $2$-approximation~\cite{charikar2000},
practical methods use iterative peeling~\cite{boob2020flowless},
core-based pruning~\cite{fang2019,xu2024}, convex optimization
\cite{danisch2017convex,harb2022scalable}, spectral methods
\cite{feng2024unified}, and parallel computation~\cite{luo2023,sukprasert2024parallel}.
Other work studies local algorithms~\cite{andersen2010local},
density-friendly decompositions~\cite{tatti2019density},
locally densest subgraphs~\cite{qin2015local,yang2026local,trung2023local},
and indexing all densest solutions~\cite{chang2020deconstruct}.
Directed densest-subgraph discovery has also received sustained attention,
including core-based and convex approaches by Ma et al.
\cite{ma2021directed,ma2024directed} and recent algorithms by
Zhou et al. and Liang et al.~\cite{zhou2025directed,liang2026directed}.
Variants impose size constraints~\cite{andersen2009size,khuller2009dense},
anchor the solution around specified vertices~\cite{dai2022anchored,zhou2026anchored},
or measure density through cliques~\cite{tsourakakis2015clique,zhou2024clique,zhou2025clique}.
Clique-based objectives also support locally dense solutions
\cite{xu2024localclique,zhou2026localclique}.

\paragraph{Densest subhypergraphs.}
Chekuri et al.~\cite{hd_cqtorres2022} study peeling and flow algorithms
through supermodular density, which includes fully contained-edge
hypergraph density. Huang et al.~\cite{hd_huang2024hyper} develop
formulations based on negative supermodularity and strongly localized methods,
along with exact global solvers. Bengali et al.~\cite{hd_bengali2026}
allow arbitrary monotonic partial-edge rewards and give an exact flow
implementation for the convex regime. Our objective is the special
case in which only fully contained hyperedges contribute.
These static solvers can answer a dynamic query by recomputing on the
current snapshot, but do not reuse the maintained allocation across
updates.

\paragraph{Dynamic densest subhypergraphs.}
Hu et al.~\cite{hd_hu2017} maintain dense subsets in evolving hypergraphs,
with approximation factors $r(1+\epsilon)$ for incremental updates
and $r^2(1+\epsilon)$ for fully dynamic updates.
Bera et al.~\cite{hd_bera2022} obtain a near-one factor independent of
hyperedge rank, including weighted hypergraphs, and provide the
\texttt{Udshp} implementation. Their algorithms use orientations and
local endpoint-load conditions.
A preliminary version of Chekuri et al.~\cite{chekuri2024} discusses
a hypergraph extension that reduces update and space costs using compact representations of
replicated edges and local labels; it also maintains an implicit
vertex-list representation of an approximate dense set.
We compare with \texttt{Udshp} and a reconstruction of this hypergraph
extension. Leng et al.~\cite{hd_leng2026decomposition} study hypergraph
density decomposition and give algorithms for insertions and deletions.
Their maintained object is a hierarchy of integer density layers.
\texttt{CAP} instead maintains an explicit approximate global solution
and uses its density to control changes to an endpoint allocation.

\paragraph{Dynamic densest subgraphs and orientations.}
For ordinary graphs, Bhattacharya et al.~\cite{bhattacharya2015} maintain
approximate density through hierarchical decompositions, while
Epasto et al.~\cite{epasto2015} retain dense sets to reduce
recomputation in evolving graphs. Sawlani and Wang~\cite{sawlani2020}
establish a near-one fully dynamic approximation through orientations.
The adaptive-orientation results of Chekuri et al.~\cite{chekuri2024}
and their engineering study~\cite{grossmann2025esa} further develop
local balance approaches for graphs.
Greedy dynamic orientations and their implementations have also been
studied~\cite{berglin2020,borowitz2023}.
Xu et al.~\cite{xu2024} maintain a density-threshold core containing
an optimum, using a retained dense set and local tests to avoid some
recomputations. A solver can refine that candidate into an exact or
approximate answer. Thus, reusing a useful solution is established
prior work; our policy uses its density to govern a persistent
allocation and maintain a near-one guarantee after every update.
Dynamic variants maintain solutions on directed graphs~\cite{ma2021directed},
dense regions under edge-weight changes~\cite{angel2014},
top-$k$ solutions~\cite{nasir2017}, dense subgraphs
in graph collections~\cite{valari2012}, temporal subgraphs
\cite{liu2022temporal}, anchored solutions~\cite{zhang2024anchored},
and density decompositions~\cite{zhang2025decomp}.
Exact dynamic orientations use directed searches and can skip changes
when the maximum degree remains feasible~\cite{grossmann2025exact}.
Pseudoarboricity maintenance~\cite{zhang2024pseudo} provides another
specialized graph baseline. Related static work computes orientations
minimizing the maximum indegree~\cite{venkateswaran2004}.
\end{revision}

%% file: sections/preliminaries.tex
\section{Preliminaries}
\label{sec:cap-model}
In this section, we introduce the notation and specify the dynamic
densest-subhypergraph problem. We then give its linear programming
formulation and discuss the allocation bounds and residual paths used in our algorithm.

\subsection{Hypergraphs and Density}
\label{sec:hypergraph-notation}
A hypergraph $G=(V,E)$ consists of a vertex set $V$ and a collection of
hyperedges $E$. Each hyperedge $e$ is a nonempty subset of $V$.
Let $n=|V|$ and $m=|E|$. We assume $n\ge1$ and fix an upper bound
$r\ge2$ on hyperedge size. The incidence volume is $I=\sum_{e\in E}|e|$, and $\deg(v)$ is
the number of hyperedges containing $v$. An ordinary graph is the special
case in which every edge has two endpoints ($r=2$). Our model also permits
singleton hyperedges and parallel hyperedges. Parallel hyperedges have
distinct identifiers and contribute separately to all edge counts.

For $S\subseteq V$, let $E(S)=\{e\in E:e\subseteq S\}$ be its induced
hyperedges. The density of a nonempty set $S$ and the optimal density are
\[
 \rho(S)=\frac{|E(S)|}{|S|},\qquad
 \rho^*=\max_{\emptyset\ne S\subseteq V}\rho(S).
\]
Thus, a hyperedge contributes one to the numerator exactly when all its
endpoints belong to $S$. For convenience, we set $\rho(\emptyset)=0$.

\subsection{Problem Specification}
\label{sec:problem-specification}
We consider a fixed vertex universe and a sequence of hyperedge insertions
and deletions. An insertion introduces a new edge identifier together with
its endpoints; a deletion removes an active identifier. The sequence may
contain arbitrary legal insertions and deletions.
All graph quantities below refer to the current snapshot; we omit the
time index when the snapshot is clear.

\begin{problem}[Dynamic densest subhypergraph]
\label[problem]{prob:dynamic-density}
Given $0<\epsilon\le1$, maintain a vertex set $C$ after each update such
that $\rho(C)\ge\rho^*/(1+\epsilon)$. A query reports the vertices of $C$
and their density. When $E=\emptyset$, the answer is $C=\emptyset$.
\end{problem}

We call the maintained set $C$ a \emph{witness}: its density is an explicit
lower bound on the optimum. We write $k=|C|$, $M=|E(C)|$, and
$L=\rho(C)$, so $L=M/k$ when $C\ne\emptyset$ and $L=0$ when
$C=\emptyset$. The algorithm also maintains an upper bound $U$ on
$\rho^*$. The condition $U\le(1+\epsilon)L$ then certifies the required
approximation. The witness is maintained after every update, even when
queries occur less frequently. Reporting its vertices costs $\Theta(1+k)$,
including constant time for an empty answer.

\subsection{\rev{Linear Programming Formulation}}
\label{sec:density-lp}
At a fixed snapshot, the densest-subhypergraph problem has the following
linear programming formulation~\cite{hd_bera2022}. A variable $y_v$ gives
the weight of vertex $v$, and $z_e$ gives the weight of hyperedge $e$:
\begin{equation}
\begin{aligned}
 \text{(P)}\qquad \max\quad &\sum_{e\in E}z_e\\
 \text{s.t.}\quad &z_e\le y_v &&(e\in E,\ v\in e),\\
 &\sum_{v\in V}y_v\le1,\\
 &y_v,z_e\ge0.
\end{aligned}
\label{eq:density-primal}
\end{equation}
The first constraint allows an edge to contribute only to the extent
supported by every endpoint. The second normalizes the total vertex
weight, turning the density ratio into a linear objective.

\begin{lemma}[Exact density formulation~\cite{hd_bera2022}]
\label[lemma]{lem:density-lp}
The optimum of \Cref{eq:density-primal} equals $\rho^*$.
\end{lemma}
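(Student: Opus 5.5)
We prove the two inequalities $\mathrm{OPT}(P)\ge\rho^*$ and $\mathrm{OPT}(P)\le\rho^*$ separately. The first is a direct construction from an optimal set. The second is the usual threshold (level-set) rounding argument of Charikar, adapted to the hyperedge constraint $z_e\le y_v$ for all $v\in e$.

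\textbf{Lower bound.} If $E=\emptyset$, then $\rho^*=0$, and the all-zero solution is feasible with value $0$; every feasible solution also has value $0$, so the lemma holds. Otherwise, fix a nonempty $S$ with $\rho(S)=\rho^*$. Set $y_v=1/|S|$ for $v\in S$ and $y_v=0$ otherwise. Set $z_e=1/|S|$ for $e\in E(S)$ and $z_e=0$ otherwise. Then $\sum_v y_v=1$. If $e\in E(S)$, every endpoint $v\in e$ lies in $S$, so $z_e\le y_v$ holds; all other constraints hold trivially. The objective equals $|E(S)|/|S|=\rho^*$.

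\textbf{Upper bound.} Take any feasible $(y,z)$. We may assume $z_e=\min_{v\in e}y_v$, since raising each $z_e$ to this value stays feasible and does not decrease the objective. For a threshold $t\ge0$, define
\[
S(t)=\{v:y_v\ge t\},\qquad E(t)=\{e:z_e\ge t\}.
\]
Because $z_e=\min_{v\in e}y_v$, we have $e\in E(t)$ if and only if every endpoint of $e$ has $y_v\ge t$, so $E(t)=E(S(t))$ exactly. The layer-cake identities give
\[
\int_0^\infty|S(t)|\,dt=\sum_v y_v\le1,\qquad \int_0^\infty|E(S(t))|\,dt=\sum_e z_e.
\]
For every $t>0$ with $S(t)\ne\emptyset$ we have $|E(S(t))|\le\rho^*|S(t)|$. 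If $S(t)=\emptyset$, then $E(S(t))=\emptyset$, so this inequality holds for every $t>0$. Integrating gives $\sum_e z_e\le\rho^*\sum_v y_v\le\rho^*$.

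The only delicate point is the identification $E(t)=E(S(t))$. It is exactly what lets the hyperedge rank $r$ drop out, and it relies on first normalizing $z_e$ to the minimum endpoint weight. All integrals are finite sums of step functions, since $y$ takes finitely many values, so no measure-theoretic care is needed. Combining the two bounds shows that the optimum of \Cref{eq:density-primal} equals $\rho^*$.
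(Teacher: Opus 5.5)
Your proof is correct. The uniform solution $y_v=z_e=1/|S|$, supported on an optimal set $S$ and its induced edges, gives $\mathrm{OPT}(\mathrm P)\ge\rho^*$. Normalizing to $z_e=\min_{v\in e}y_v$ makes the level sets satisfy $E(t)=E(S(t))$; this uses that hyperedges are nonempty, so $E(\emptyset)=\emptyset$. The layer-cake integration then yields $\sum_e z_e\le\rho^*\sum_v y_v\le\rho^*$.

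The paper gives no proof of its own and cites Bera et al.\ for this lemma. Your argument is the standard Charikar-style level-set rounding that the cited result relies on, so there is no different route to compare.
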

Thus the linear program captures the original density objective exactly.
This equality lets us certify a combinatorially maintained set by
comparing its density with a feasible dual allocation.
We interpret the dual of \Cref{eq:density-primal} as an orientation.
Associate $f_{e,v}\ge0$ with each constraint $z_e\le y_v$ and $D\ge0$
with the normalization constraint. The dual is
\begin{equation}
\begin{aligned}
 \text{(D)}\qquad \min\quad &D\\
 \text{s.t.}\quad &\sum_{v\in e}f_{e,v}\ge1 &&(e\in E),\\
 &\sum_{e\ni v}f_{e,v}\le D &&(v\in V),\\
 &f_{e,v}\ge0.
\end{aligned}
\label{eq:density-dual}
\end{equation}
For each edge, excess mass above one can be removed without increasing
any vertex load. Thus an optimal dual solution can satisfy
$\sum_{v\in e}f_{e,v}=1$. Such an endpoint allocation is a
\emph{fractional orientation}: an integral orientation assigns the
whole unit to one incident vertex, while a fractional orientation can
split it among several vertices. By strong duality, the minimum possible
maximum load over fractional orientations equals $\rho^*$.
Thus, any induced vertex set gives a lower bound on the optimum, while
any fractional orientation gives an upper bound through its maximum load.

\subsection{Allocation Bounds and Residual Paths}
\label{sec:allocation-basics}
We next describe how to store the orientation, use it to certify
an approximate solution, and adjust it through residual paths.
We represent a fractional orientation using integer endpoint shares.
Following the integer representation of fractional
orientations~\rev{\cite{chekuri2024}}, fix a positive integer $b$ and
store shares $x_{e,v}\in\mathbb Z_{\ge0}$ satisfying
\[
 \sum_{v\in e}x_{e,v}=b,\qquad q_v=\sum_{e\ni v}x_{e,v}.
\]
The fraction assigned to $v$ by $e$ is $x_{e,v}/b$, and $q_v$ is its
scaled load. Setting $f_{e,v}=x_{e,v}/b$ gives a feasible solution to
\Cref{eq:density-dual} with $D=\max_vq_v/b$.
These quantized allocations form a subset of the dual feasible solutions;
strong duality refers to optimization over all fractional orientations.
Shares are counters; they do not require $b$ copies of an edge.

\begin{lemma}[Allocation upper bound~\cite{hd_bera2022,hd_billig2025}]
\label[lemma]{lem:allocation-upper}
Every feasible allocation satisfies $\rho^*\le U$, where
$U=\max_{v\in V}q_v/b$.
\end{lemma}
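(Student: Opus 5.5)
The plan is a direct double-counting argument on an arbitrary nonempty set $S\subseteq V$. There is also the LP-duality route through \Cref{lem:density-lp} and weak duality: $f_{e,v}=x_{e,v}/b$ with $D=U$ is feasible for \Cref{eq:density-dual}, so weak duality gives the primal optimum at most $U$. I will present the combinatorial version because it is self-contained, and note that it is exactly weak duality specialized to the primal solution $y_v=1/|S|$ on $S$ and $z_e=1/|S|$ on $E(S)$.

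Fix $S\ne\emptyset$. First sum the scaled loads over $S$:
\[
 \sum_{v\in S}q_v=\sum_{v\in S}\sum_{e\ni v}x_{e,v}=\sum_{e\in E}\sum_{v\in e\cap S}x_{e,v}.
\]
All shares are nonnegative, so I can discard every hyperedge not in $E(S)$ and obtain
\[
 \sum_{v\in S}q_v\ \ge\ \sum_{e\in E(S)}\sum_{v\in e}x_{e,v}=b\,|E(S)|.
\]
The last equality holds because $e\subseteq S$ means $e\cap S=e$, and $\sum_{v\in e}x_{e,v}=b$ by feasibility. Bounding each $q_v$ by $\max_u q_u=bU$ gives $b|E(S)|\le |S|\,bU$. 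Dividing by $b|S|>0$ yields $\rho(S)\le U$.

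Since $S$ was arbitrary, taking the maximum gives $\rho^*\le U$. The edge cases are immediate. When $E=\emptyset$, every density is $0$ and $U\ge0$. Singleton and parallel hyperedges require no special treatment, since each identifier is counted separately on both sides of the inequality.

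The only step needing care is the inequality that drops edges not fully contained in $S$. It relies on nonnegativity of shares, and on the fact that for $e\subseteq S$ the entire mass $b$ of $e$ lands inside $S$. This is where the "fully contained" definition of $E(S)$ is used; otherwise the argument is routine.
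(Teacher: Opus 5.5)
Your proof is correct. It reaches the bound by a more elementary route than the paper, which gives no standalone proof and instead argues through the linear program. The paper observes that $f_{e,v}=x_{e,v}/b$ with $D=\max_v q_v/b$ is feasible for the dual (D). It then combines the exact formulation \Cref{lem:density-lp} with duality to obtain $\rho^*=\operatorname{OPT}(\mathrm P)=\operatorname{OPT}(\mathrm D)\le U$.

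Your double-counting makes explicit only the part of that chain the inequality needs. This is the easy direction $\rho(S)\le\operatorname{OPT}(\mathrm P)$, via $y_v=z_e=1/|S|$, composed with weak duality. You therefore avoid the harder direction of the cited LP lemma, $\operatorname{OPT}(\mathrm P)\le\rho^*$, and strong duality altogether. Your argument also shows the bound needs only nonnegative shares with $\sum_{v\in e}x_{e,v}\ge b$, not integrality.

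What the paper's LP framing gives in return is the reading of $U$ as a dual objective value. It also gives the tightness fact that the minimum maximum load over fractional orientations equals $\rho^*$. That fact explains why the certificate $L\le\rho^*\le U\le(1+\epsilon)L$ is attainable in principle, but it plays no role in the inequality stated in the lemma.
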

The allocation need not minimize its maximum load: any feasible one
already gives a valid upper bound.
To certify the approximation, we compare this upper bound with
the density of the maintained set. The witness $C$ supplies a feasible primal solution of value $L$, and
the maintained orientation supplies a feasible dual solution of value $U$.
The approximation certificate is therefore the primal--dual bound
\[
 L\le\operatorname{OPT}(\mathrm P)=\rho^*
   =\operatorname{OPT}(\mathrm D)\le U\le(1+\epsilon)L.
\]
In \Cref{sec:cap-method}, we maintain this certificate through
combinatorial changes to the witness and endpoint shares.

For example, suppose $L=3$ and $U=3.2$, as illustrated in
\Cref{fig:prelim-certificate}(a). Since $3\le\rho^*\le3.2$ and
$3.2\le(1+0.1)\cdot3$, the maintained set is already a
$(1+0.1)$-approximation. The certificate does not require knowing
$\rho^*$ exactly or further reducing $U$.

\paragraph{Residual paths.}
A \emph{positive incidence} is a pair $(e,u)$ with $x_{e,u}>0$.
It permits a residual step $u\xrightarrow{e}v$ to any $v\in e$:
some of $e$'s share can move from $u$ to $v$.
Consider a residual path of length $\ell\ge1$ with distinct vertices
and distinct hyperedges:
\[
v_0\xrightarrow{e_1}v_1\xrightarrow{e_2}\cdots
\xrightarrow{e_\ell}v_\ell.
\]
Moving a positive integer amount $a\le\min_i x_{e_i,v_{i-1}}$
at every step preserves
each edge's total allocation. The load of $v_0$ decreases by $a$, the
load of $v_\ell$ increases by $a$, and the intermediate loads are unchanged.
\Cref{fig:prelim-certificate}(b) illustrates a two-edge path:
transferring one share along each edge changes the three loads from
$(16,15,14)$ to $(15,15,15)$.
This is the basic operation used to repair an allocation. Residual
closure under positive incidences is also used in hypergraph extraction
algorithms~\cite{hd_billig2025}.
\input{sections/preliminaries_figure}

%% file: sections/preliminaries_figure.tex
\begin{figure}[t]
\centering
\begin{tikzpicture}[x=1cm,y=1cm,draw=black,text=black,
  every node/.style={font=\footnotesize},line width=0.6pt,
  >={Stealth[length=1.6mm,width=1.2mm]}]
\path[use as bounding box] (0,-3.5) rectangle (8.0,1.05);
\node[anchor=west,font=\footnotesize\bfseries] at (0,0.9)
  {(a) A sufficient approximation certificate};
\fill[black!10] (0.6,0.16) rectangle (4.6,0.57);
\node at (2.6,0.38) {$\rho^*\in[L,U]$};
\draw[->] (0.15,0.1) -- (7.7,0.1);
\foreach \x in {0.6,4.6,6.6} {\draw (\x,0.02) -- (\x,0.2);}
\node[anchor=north] at (0.6,-0.02) {$L=3$};
\node[anchor=north] at (4.6,-0.02) {$U=3.2$};
\node[anchor=north,align=center] at (6.6,-0.02)
  {$(1+\epsilon)L$\\$=3.3$};
\node[anchor=west] at (0.15,-0.8)
  {$\epsilon=0.1:\quad U\le(1+\epsilon)L$ certifies $C$.};
\node[anchor=west,font=\footnotesize\bfseries] at (0,-1.35)
  {(b) Redistributing shares along a residual path};
\foreach \name/\x in {u/0.9,v/3.8,w/6.7} {
  \node[circle,draw=black,minimum size=5.2mm,inner sep=0pt]
    (\name) at (\x,-2.08) {$\name$};
}
\draw[->] (u) -- node[above=3pt] {$e_1:\ a=1$} (v);
\draw[->] (v) -- node[above=3pt] {$e_2:\ a=1$} (w);
\node at (0.9,-2.7) {$16\to15$};
\node at (3.8,-2.7) {$15\to15$};
\node at (6.7,-2.7) {$14\to15$};
\node at (3.8,-3.2) {Integer loads $q_v$ before $\to$ after ($b=5$)};
\end{tikzpicture}
\caption{Allocation tools. (a) The interval $[L,U]$ contains the
optimal density and certifies the maintained set at $\epsilon=0.1$.
(b) A valid residual path on distinct hyperedges transfers one share
at each step; the middle load is unchanged. Arrows show share transfers,
not directions of the input hyperedges.}
\label{fig:prelim-certificate}
\end{figure}

%% file: sections/cap_algorithm.tex
\section{Witness-Guided Dynamic Maintenance}
\label{sec:cap-method}
We formally propose our main algorithm called \method{}
(\emph{\underline{C}apacitated \underline{A}ugmenting \underline{P}aths}) for
\Cref{prob:dynamic-density}. We now give an overview of the algorithm.

For a requested accuracy $0<\epsilon\le1$, we fix the integer precision
$b\ge\lceil2r/\epsilon\rceil$ and two multipliers
$\alpha=1+\epsilon$ and $\gamma=1+\epsilon/2$.
The lower bound on $b$ ensures $bL\ge2/\epsilon$ after witness
normalization (\Cref{sec:cap-initialization}).
For a nonempty witness of density $L$, define an \emph{outer capacity}
$c=\lfloor\alpha bL\rfloor$ and an \emph{inner threshold}
$t=\lfloor\gamma bL\rfloor$.
The outer capacity $c$ specifies the stopping condition: repair is
needed exactly when some $q_v>c$. The inner threshold $t$ specifies
where to move load: the search seeks a destination with $q_v<t$,
although the transfer may increase its load up to $c$. We call this
choice \emph{reserve destinations}.

The main idea is to use the density of the current witness to decide
when the allocation needs repair.
For a query, we directly output the maintained witness $C$.
By \Cref{lem:allocation-upper}, a
maximum load of at most $(1+\epsilon)bL$ already certifies $C$.
If an update violates this condition, there are two ways to restore it:
reduce the maximum load by moving shares, or increase the density of the
witness. A residual search (\Cref{alg:cap-search}) identifies which of
these actions to take. After every edge update processed by
\Cref{alg:cap-update}, the maintained set has density at least
$\rho^*/(1+\epsilon)$ (\Cref{thm:certificate}).

Specifically, \Cref{alg:cap-update} has three steps. First, it updates the shares of
the inserted or deleted edge and the induced-edge count of $C$
(\Cref{sec:cap-update}). Second, it normalizes a witness whose density
is too small and computes the resulting capacity
(\Cref{sec:cap-refresh}). Third, while a load exceeds this capacity,
it searches from an overloaded vertex (\Cref{sec:cap-search}). A
successful search returns a path for transferring load. If no destination
is reachable, the reached set becomes a denser witness and increases the
inner threshold; the outer capacity cannot decrease. An occasional peeling pass can also improve the witness
(\Cref{sec:cap-refresh}). The procedure returns once all loads are at most
$c$; \Cref{sec:correctness} proves that it always reaches this point.
We analyze the maintenance cost in \Cref{sec:hyper-support-candidate}.

\subsection{Initialization}
\label{sec:cap-initialization}
To implement \method{}, we store the shares of each active edge, the loads $q_v$, and
an indexed maximum heap over these loads. We also store $C$, membership
marks, and the exact induced-edge count $M$. Each vertex has a list of its
positive incidences, with inverse positions for updating the list when a
share becomes zero or positive. Full incidence lists support edge updates
and witness refresh. A counter $D$ records positive-incidence inspections
since the last refresh. Thresholds are evaluated by exact integer
cross-products; the analysis assumes that all intermediate arithmetic
values are representable.

We call a state \emph{stable} when the allocation is feasible, the
witness count is exact, and all loads are at most $c$; the empty state
is stable with zero loads and zero bounds. At the end of each update,
the stable allocation and witness satisfy
\begin{equation}
 L\le\rho^*\le U\le\alpha L.
 \label{eq:cap-certificate}
\end{equation}

Starting from an empty graph, we set all loads to zero, $C=\emptyset$,
$M=L=U=0$, and $D=0$. An initial nonempty graph is built by inserting its
edges through \Cref{alg:cap-update}. Before repairing a nonempty
graph, we ensure $L\ge1/r$. If necessary, we use the endpoints of a
surviving edge as the witness and count all edges it induces. This
normalization guarantees $bL\ge2/\epsilon$.

\begin{algorithm}[t]
\small
\caption{\texttt{CAP}: process an edge update}
\label{alg:cap-update}
\KwIn{An insertion or deletion of edge $e$; maintained state}
\KwOut{A witness $C$ and bounds $L,U$}
\eIf{inserting $e$\nllabel{ln:update-branch}}{
  $M\gets M+\mathbf{1}[e\subseteq C]$\nllabel{ln:update-insert-count}\;
  Insert $e$ and water-fill its $b$ units among its endpoints\nllabel{ln:update-insert-mass}\;
}{
  $M\gets M-\mathbf{1}[e\subseteq C]$\nllabel{ln:update-delete-count}\;
  Subtract $x_{e,v}$ from every $q_v$, $v\in e$, and remove $e$\nllabel{ln:update-delete-mass}\;
}
\If{$E=\emptyset$\nllabel{ln:update-empty-test}}{Set $C\gets\emptyset$, $M,L,U\gets0$; \Return\nllabel{ln:update-empty}\;}
Normalize $C$ if $C=\emptyset$ or $M/|C|<1/r$\nllabel{ln:update-normalize}\;
Set $L\gets M/|C|$, $c\gets\lfloor\alpha bL\rfloor$; $f\gets\mathrm{false}$\nllabel{ln:update-capacity}\;
\While{$\max_vq_v>c$\nllabel{ln:update-repair-loop}}{
  \If{refresh is enabled, $D>4(n+rm)$, and $f=\mathrm{false}$\nllabel{ln:update-refresh-test}}{
    Refresh $C$ by peeling; $D\gets0$; $f\gets\mathrm{true}$\nllabel{ln:update-refresh}\;
    Recompute $L,c$\nllabel{ln:update-refresh-capacity}\; \If{$\max_vq_v\le c$}{\textbf{break}\nllabel{ln:update-refresh-done}\;}
  }
  $z\gets\arg\max_vq_v$; $t\gets\lfloor\gamma bL\rfloor$\nllabel{ln:update-root}\;
  Search from $z$ with threshold $t$ using \Cref{alg:cap-search}\nllabel{ln:update-search}\;
  Add the search's positive-incidence inspections to $D$\nllabel{ln:update-debt}\;
  \eIf{the search returns a path $\pi$ to $w$}{
    Transfer the amount in \Cref{eq:cap-transfer} along $\pi$\nllabel{ln:update-transfer}\;
  }{
    $C\gets R$; $M\gets\sum_{v\in R}q_v/b$\nllabel{ln:update-witness}\;
    Recompute $L\gets M/|C|$ and $c\gets\lfloor\alpha bL\rfloor$\nllabel{ln:update-witness-capacity}\;
  }
}
$U\gets\max_vq_v/b$; \Return $(C,L,U)$\nllabel{ln:update-return}\;
\end{algorithm}

\subsection{Processing Insertions and Deletions}
\label{sec:cap-update}
The implementation details of processing edge updates are shown in
\Cref{alg:cap-update}. The procedure consists of the following steps:
\begin{itemize}[itemsep=2pt,topsep=3pt]
\item[(1)] Applying the update
(\Crefrange{ln:update-branch}{ln:update-delete-mass}).
\item[(2)] Preparing the certificate
(\Crefrange{ln:update-empty-test}{ln:update-capacity}).
\item[(3)] Restoring the certificate
(\Crefrange{ln:update-repair-loop}{ln:update-witness-capacity}).
\item[(4)] Returning the solution (\Cref{ln:update-return}).
\end{itemize}

\subsubsection{Applying the Update}
For an insertion, we increase $M$ exactly when all endpoints of $e$ lie
in $C$ (\Cref{ln:update-insert-count}). We then insert the edge and
allocate its $b$ units by \emph{discrete water filling}
(\Cref{ln:update-insert-mass}). This rule gives priority to the
least-loaded endpoints of $e$. More precisely, let $q_v$ denote the
load before insertion and start with $x_{e,v}=0$ for every $v\in e$.
Repeatedly increase $x_{e,v}$ by one for an endpoint minimizing
$q_v+x_{e,v}$, breaking ties by endpoint order, until
$\sum_{v\in e}x_{e,v}=b$. We then add these shares to the endpoint
loads. A binary search for the filling level computes the same
allocation in batches.

For a deletion, we decrease $M$ under the same
containment condition (\Cref{ln:update-delete-count}) and remove the
edge's shares from the endpoint loads (\Cref{ln:update-delete-mass}).
Both branches update the affected incidence lists and heap keys.

\subsubsection{Preparing the Certificate}
If no edge remains, we return the empty witness and zero bounds
(\Cref{ln:update-empty}). Otherwise, we normalize the witness when
needed (\Cref{ln:update-normalize}): we replace an empty witness or one
of density below $1/r$ by the endpoints of a surviving edge and compute
their induced-edge count. \Cref{sec:cap-normalization} gives the details.
We then compute $L$ and $c$, and reset $f$ to false
(\Cref{ln:update-capacity}). This flag permits at most one peeling
refresh during the current update.

An insertion never lowers the density of the retained witness. Only
the inserted edge adds load, so it creates at most $b$ units of excess
above the new capacity. Deleting an edge $e\nsubseteq C$ leaves the
witness density and capacity unchanged while reducing loads; it
therefore requires no repair. Deleting an edge $e\subseteq C$ can
lower the capacity throughout the graph. We consequently test the global maximum load after normalization
and enter the repair loop only when this maximum exceeds $c$
(\Cref{ln:update-repair-loop}).

\subsubsection{Restoring the Certificate}
Each iteration first checks whether the accumulated inspection count
$D$ exceeds $4(n+rm)$ and no refresh has occurred during this update
(\Cref{ln:update-refresh-test}). If so, a peeling pass seeks a denser
witness, which can raise the capacity without moving load
(\Cref{ln:update-refresh}; details in \Cref{sec:cap-budgeted-refresh}).
We reset $D$, set $f$ to true, and recompute $L,c$. If the new capacity
already bounds every load, the loop ends
(\Crefrange{ln:update-refresh-capacity}{ln:update-refresh-done}).

Otherwise, we select a maximum-load vertex $z$, compute the inner
threshold $t$, and invoke the residual search
(\Cref{ln:update-root,ln:update-search}). This search either finds a
path from $z$ to a vertex $w$ with $q_w<t$, or returns the entire reached
set $R$ when no such destination exists. \Cref{sec:cap-search} describes
this breadth-first process in \Cref{alg:cap-search}. Loads remain fixed
during the search; its positive-incidence inspections are added to $D$
(\Cref{ln:update-debt}).

On success, we transfer load along the returned path
(\Cref{ln:update-transfer}).
Let $\pi$ be the parent path from $z$ to the destination $w$, and let
$x_i$ be the donor share on its $i$th step. We transfer
\begin{equation}
 a=\min\left\{q_z-c,\ c-q_w,\ \min_{i\in\pi}x_i\right\}
 \label{eq:cap-transfer}
\end{equation}
units along every step. On a step $u\xrightarrow{e}v$, this subtracts
$a$ from $x_{e,u}$ and adds $a$ to $x_{e,v}$. Positive-incidence lists
are adjusted when shares cross zero. Only the loads and heap keys of
$z$ and $w$ change. The destination may end above $t$; the transfer is
bounded by the outer capacity $c$.

For example, suppose $r=3$, $\epsilon=0.5$, $b=12$, and $L=1$, giving
$c=18$ and $t=15$. A root of load $20$ can transfer two units to a
destination of load $14$ if every donor share on the path is at least
two. The new endpoint loads are $18$ and $16$. The destination exceeds
$t$ after the transfer, but both endpoints satisfy the capacity $c$.

On failure, we replace $C$ by $R$ and obtain its exact edge count as
$M=\sum_{v\in R}q_v/b$ (\Cref{ln:update-witness}).
\Cref{lem:closed-witness} proves both this identity and $\rho(R)>L$.
We recompute the density and capacity
(\Cref{ln:update-witness-capacity}), then repeat the maximum-load test.
The new capacity cannot decrease, and the load sum avoids a separate
containment scan over the reached set.

\subsubsection{Returning the Solution}
Once all loads fit within $c$, we set $U=\max_vq_v/b$ and return the
maintained set $C$ together with $L$ and $U$ (\Cref{ln:update-return}).
The stopping condition and allocation bound give
$L\le\rho^*\le U\le(1+\epsilon)L$.
\Cref{sec:correctness} proves that every update reaches this condition.

\subsubsection{Complexity Analysis}
For an edge of size $s$, containment testing and share removal inspect
$O(s)$ incidences. Batch water filling takes
$O(s[\log(s+1)+\log(b+1)])$ time, and updating the heap keys costs
$O(s\log(n+1))$. Deletion also updates inverse incidence positions in
$O(s\log(r+1))$ time. These costs are incurred even when no residual
search is needed.

Restoring the certificate additionally pays for every search and path
transfer. \Cref{sec:cap-search-time} gives the search cost in terms of
the incidences examined. For a path of $\ell$ steps, the parent pointers
identify donor shares, and binary search locates receiving shares in
the sorted endpoint lists. Thus a transfer takes
$O(\ell\log(r+1)+\log(n+1))$ amortized time, including changes to the two endpoint
heap keys. Normalization and optional refresh are accounted for in
\Cref{sec:cap-normalization,sec:cap-budgeted-refresh};
\Cref{sec:total-maintenance-work} combines these costs.
Querying the bounds costs $O(1)$; reporting $C$ costs
$\Theta(1+|C|)$.

\subsection{Residual Search}
\label{sec:cap-search}
\label{sec:cap-reserve}
The residual search determines where an overloaded vertex can send
load. \Cref{alg:cap-search} performs a breadth-first search from $z$
along positive incidences, using the residual paths defined in
\Cref{sec:allocation-basics}. It stops at the first reached vertex whose
load is below the destination threshold $\theta$. If there is no such
vertex, it returns the reached set $R$. In \method{}, $\theta=t$;
the returned path permits a transfer, while a failed search identifies
a denser witness.

\begin{algorithm}[t]
\small
\caption{\texttt{Search}$(z,\theta)$}
\label{alg:cap-search}
\KwIn{Root $z$ with $q_z>c$; destination threshold $\theta\le c$}
\KwOut{A residual path, or the reached set $R$}
$R\gets\{z\}$; $Q\gets[z]$; mark no edge expanded\nllabel{ln:search-init}\;
\While{$Q\ne\emptyset$\nllabel{ln:search-loop}}{
  Pop the first vertex $u$ from $Q$\nllabel{ln:search-pop}\;
  \ForEach{positive incidence $(e,u)$\nllabel{ln:search-positive}}{
    \If{$e$ is already expanded}{\textbf{continue}\nllabel{ln:search-skip}\;}
    Mark $e$ expanded\nllabel{ln:search-mark}\;
    \ForEach{$v\in e$, in stored order\nllabel{ln:search-endpoints}}{
      \If{$v\notin R$\nllabel{ln:search-unseen}}{
        Add $v$ to $R$; set its parent to $(u,e)$\nllabel{ln:search-parent}\;
        \If{$q_v<\theta$\nllabel{ln:search-destination-test}}{\Return the parent path from $z$ to $v$\nllabel{ln:search-success}\;}
        Append $v$ to $Q$\nllabel{ln:search-enqueue}\;
      }
    }
  }
}
\Return $R$ as a failed search\nllabel{ln:search-failure}\;
\end{algorithm}

\subsubsection{Search Procedure}
\Cref{ln:search-init} initializes $R$ and the queue $Q$ with $z$ and
starts a fresh expansion stamp. The queue processes vertices in
breadth-first order (\Cref{ln:search-loop,ln:search-pop}). For each
popped vertex $u$, we inspect its positive incidences
(\Cref{ln:search-positive}), since only edges with $x_{e,u}>0$ can
transfer load away from $u$.

\Cref{ln:search-skip,ln:search-mark} skip an edge already expanded by
this search and mark a newly encountered edge before examining its
endpoints. Each edge is therefore expanded at most once. For every
unseen endpoint $v$, we add it to $R$ and record $(u,e)$ as its parent
(\Crefrange{ln:search-endpoints}{ln:search-parent}). These parent
pointers retain the residual steps by which vertices were reached.

If $q_v<\theta$, the parent pointers give a path from $z$ to $v$, which
the search returns immediately
(\Crefrange{ln:search-destination-test}{ln:search-success}). Otherwise,
$v$ is appended to $Q$ for further exploration
(\Cref{ln:search-enqueue}). If the queue becomes empty, every positive
incidence reachable from $z$ has been examined; we return $R$ as a
failed search (\Cref{ln:search-failure}).

\subsubsection{Effectiveness Analysis}
\label{sec:cap-search-effectiveness}
The following two lemmas show that successful searches reduce excess
load and failed searches improve the witness. The load gap provided by
reserve destinations also yields the transport bound in
\Cref{sec:reserve-transport}.

\begin{lemma}[Progress of a successful search]
\label[lemma]{lem:transfer-progress}
A successful transfer preserves a feasible allocation and decreases
the integer excess $X_c=\sum_v(q_v-c)_+$ by $a\ge1$.
\end{lemma}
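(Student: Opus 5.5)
We argue directly from the transfer rule \Cref{eq:cap-transfer} and the structure of the path returned by \Cref{alg:cap-search}. First we show $a\ge1$ by bounding each of the three terms. Since the search is only invoked inside the repair loop with root $z=\arg\max_vq_v$ and $\max_vq_v>c$, we have $q_z-c\ge1$. The destination $w$ was returned because $q_w<\theta=t=\lfloor\gamma bL\rfloor\le\lfloor\alpha bL\rfloor=c$, as $\gamma\le\alpha$; hence $c-q_w\ge1$. Each step $u\xrightarrow{e}v$ on the parent path was recorded while scanning a positive incidence $(e,u)$, so the donor share satisfies $x_{e,u}\ge1$. All quantities are integers, so $a\ge1$.

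Next we check feasibility, using the residual-path discussion of \Cref{sec:allocation-basics}. The parent path has distinct vertices, because each vertex is added to $R$ once and receives a single parent. Its edges are also distinct, because each edge is expanded at most once and all children created from one expansion share that edge only as siblings, never along a single root path. Since every edge appears once on the path, subtracting $a$ from $x_{e,u}$ and adding $a$ to $x_{e,v}$ keeps $\sum_{v\in e}x_{e,v}=b$. Because $a\le x_{e,u}$, every share stays nonnegative. The net load effect is that $q_z$ decreases by $a$, $q_w$ increases by $a$, and all intermediate loads are unchanged. The path is nontrivial with $w\ne z$, since $q_w<c<q_z$.

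Finally we compute the change in $X_c$. Only the terms for $z$ and $w$ change. Because $a\le q_z-c$, the new load $q_z-a$ remains at least $c$, so its excess term drops from $q_z-c$ to $q_z-c-a$, a decrease of exactly $a$. Because $a\le c-q_w$, the new load $q_w+a$ is at most $c$, so $w$'s excess term is $0$ both before and after. Therefore $X_c$ decreases by exactly $a\ge1$.

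The only delicate point is justifying that the parent path uses distinct hyperedges. If an edge appeared twice on the path, the same share could be debited twice, and the total of $e$'s shares would be preserved only by accident. We would settle this from the single-expansion marking in \Cref{ln:search-mark}: a vertex's parent edge $e$ was expanded while its parent was popped, and every ancestor's parent edge was expanded strictly earlier, so all edges along one root path are pairwise different.
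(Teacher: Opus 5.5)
Your proof is correct and follows the paper's argument. Both establish edge-distinctness of the parent path from the single-expansion (sibling) structure of the BFS, use the minimum donor share to keep shares nonnegative, cancel the intermediate load changes, and use the cap $a\le c-q_w$ to keep the destination's excess at zero while the root loses exactly $a$. You also make $a\ge1$ explicit via $t\le c$ and positive donor incidences, which the paper leaves implicit. Your closing remark slightly misdiagnoses the danger: each step moves mass within a single edge, so edge totals are preserved regardless, and with distinct vertices even a repeated edge would not debit any share twice.
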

Thus a successful search makes measurable progress: it removes $a$
units of overload without creating overload at the destination. This
bounds the number of successful searches by the total excess repaired.
\begin{proof}
The first expansion of an edge reaches all its unseen endpoints at
the same BFS layer, unless the search terminates. These endpoints are
siblings in the parent tree; a parent path thus uses an edge at most
once. Its minimum donor share bounds the transfer on each edge.
All intermediate load changes cancel. The root loses $a$ excess units,
and the destination receives at most its available capacity $c-q_w$.
\end{proof}

If no destination is found, the reached set provides a better witness.
This makes failed searches useful: they improve the witness and strictly
increase the inner threshold used by subsequent searches
(\Cref{eq:failure-threshold-growth}).

\begin{lemma}[Witness improvement]
\label[lemma]{lem:closed-witness}
A failed search with threshold $t$ returns a set $R$ of density
$\rho(R)>L$ satisfying
\begin{equation}
 b|E(R)|=\sum_{v\in R}q_v>t|R|,
\label{eq:cap-closed-count}
\end{equation}
\end{lemma}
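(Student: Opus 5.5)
The proof will rest on one structural fact: a failed search returns a set $R$ that is closed under positive incidences, so every hyperedge places all of its load on its own endpoints. From this we derive the edge-count identity, the load lower bound, and finally the density comparison.

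\emph{Step 1: closure.} Because the search failed, the queue emptied without meeting a destination. Every vertex of $R$ was therefore popped, and each of its positive incidences $(e,u)$ was inspected. Such an edge $e$ was either expanded at that moment or already expanded earlier. In both cases all endpoints of $e$ were added to $R$: an expansion scans all endpoints, and it can be cut short only by a successful return, which did not happen. Hence if $u\in R$ and $x_{e,u}>0$, then $e\subseteq R$.

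\emph{Step 2: the identity $b|E(R)|=\sum_{v\in R}q_v$.} We write
\[
\sum_{v\in R}q_v=\sum_{e}\sum_{v\in e\cap R}x_{e,v}
\]
and split the edges into two classes. If $e\subseteq R$, its inner sum is the full $b$. If $e\nsubseteq R$, then by closure no endpoint of $e$ in $R$ holds a positive share of $e$, so its inner sum is $0$. Summing over edges gives exactly $b|E(R)|$.

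\emph{Step 3: the strict inequality.} Every $v\in R$ other than $z$ passed the destination test, so $q_v\ge t$. The root satisfies $q_z>c\ge t$, which holds because $\alpha>\gamma$ makes $\lfloor\alpha bL\rfloor\ge\lfloor\gamma bL\rfloor$. Summing over $R$ gives $\sum_{v\in R}q_v>t|R|$, and together with Step 2 this proves \Cref{eq:cap-closed-count}.

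\emph{Step 4: density.} Dividing by $b|R|$ gives $\rho(R)>t/b=\lfloor\gamma bL\rfloor/b$. We must still compare this with $L$, and this is the main obstacle, since the floor could a priori drop $t/b$ below $L$. The plan is to use the normalization guarantee $bL\ge 2/\epsilon$, which is maintained at every point where a search runs. It yields
\[
\gamma bL-bL=\tfrac{\epsilon}{2}bL\ge1,
\]
so $\lfloor\gamma bL\rfloor\ge\lfloor bL+1\rfloor>bL$. Therefore $t>bL$ and $\rho(R)>t/b>L$.

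As a fallback, observe that integrality already gives $\sum_{v\in R}q_v\ge t|R|+1$, so strictness survives even in the borderline case.

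Finally, we must check that every search is run with the current $L$ satisfying $bL\ge2/\epsilon$. After normalization $L\ge 1/r$, and $L$ only increases afterwards through refresh or witness replacement. Since $b\ge\lceil 2r/\epsilon\rceil$, we get $bL\ge b/r\ge 2/\epsilon$.
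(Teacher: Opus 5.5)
Your proposal is correct and follows the paper's proof essentially step for step. Both arguments use the same chain: closure of $R$ under positive incidences after a failed search; the split into induced edges, which contribute $b$, and crossing edges, which contribute $0$; $q_v\ge t$ on reached vertices together with $q_z>c\ge t$ for strictness; and normalization $bL\ge 2/\epsilon$ to compare $t$ with $bL$.

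You even obtain $t>bL$ where the paper only needs $t\ge bL$, so your integrality ``fallback'' is unnecessary. It also would not by itself rescue a case $t<bL$, but that case cannot occur.
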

The conclusion supplies the other kind of progress. When load cannot be
moved out of the reached set, that set itself improves the witness
density and strictly increases the inner threshold. The outer capacity
cannot decrease. The load sum also gives its induced-edge count,
so this improvement needs no separate containment scan.
\begin{proof}
Every edge with a positive owner in $R$ exposes all its endpoints
before the queue becomes empty. Hence crossing edges assign no mass
to $R$, whereas each induced edge assigns all $b$ units there.
Every reached vertex has load at least $t$, and the root exceeds
$c\ge t$, giving the strict inequality. Since $bL\ge2/\epsilon$,
$t=\lfloor bL+\epsilon bL/2\rfloor\ge bL$. Thus $\rho(R)>L$.
\end{proof}

\subsubsection{Time Analysis}
\label{sec:cap-search-time}
For one search, let $P$ count inspected positive incidences, including
those skipped because their edge was already expanded, and let $J$
count endpoints inspected during edge expansions. Vertex and edge
stamps avoid clearing all marks before a search. Every vertex other
than the root is discovered during an endpoint inspection, and is
queued at most once. Hence queue operations, marking, and parent
recording take $O(1+J)$ time; the two incidence loops take $O(P+J)$
time. A returned path has at most $J$ steps, so following its parent
pointers is covered by the same bound. Thus the search takes
$O(1+P+J)$ time. Summing the loads of a failed reached set also takes
$O(1+J)$ time. Updating the allocation and installing a new witness
are charged separately in \Cref{sec:total-maintenance-work}.

Each vertex is expanded at most once and each edge is stamped after
its first expansion, so $P\le I$ and $J\le I$ for the current incidence
volume $I$. This gives an $O(1+I)$ worst-case bound for a single search.
\Cref{sec:local-search-work} sharpens the bound using the high-load
region visited by the search; \Cref{sec:reserve-transport} bounds the
transport needed across successful searches.

We also bound the number of failed searches during one update.
By \Cref{lem:closed-witness}, a failure produces a new density $L'$
with $bL'>t$. The recomputed threshold therefore satisfies
$t'=\lfloor(1+\epsilon/2)bL'\rfloor\ge t+\lfloor\epsilon t/2\rfloor$.
Normalization gives $t\ge2/\epsilon$. Since
$\lfloor a\rfloor\ge a/2$ for $a\ge1$, each failure yields
\begin{equation}
 t'\ge t+\lfloor\epsilon t/2\rfloor\ge(1+\epsilon/4)t.
 \label{eq:failure-threshold-growth}
\end{equation}
If $t_0$ is the threshold before the first search in an update, its number
of failed searches is consequently at most
\[
 \left\lfloor
 \frac{\log\bigl(\lfloor(1+\epsilon/2)b\rho^*\rfloor/t_0\bigr)}
      {\log(1+\epsilon/4)}
 \right\rfloor.
\]
Here the graph is fixed during repair, so the threshold is at most
$\lfloor(1+\epsilon/2)b\rho^*\rfloor$ and cannot decrease between
searches. This count applies to a fixed update;
\Cref{sec:total-maintenance-work} accounts for threshold decreases
across an entire update sequence.

\subsection{Maintaining and Refreshing the Witness}
\label{sec:cap-refresh}
We discuss the details of two steps in \Cref{alg:cap-update}:
normalization (\Cref{ln:update-normalize}) and budgeted refresh
(\Cref{ln:update-refresh}). Normalization ensures the minimum witness
density needed by the thresholds, whereas refresh seeks a denser
witness to reduce subsequent repair work.

\subsubsection{Normalization}
\label{sec:cap-normalization}
The normalization step (\Cref{ln:update-normalize}) chooses a surviving
edge $e$ and sets $S=e$ when the witness is empty or $L<1/r$. The set may
induce several hyperedges, so we compute $|E(S)|$ exactly. We inspect
positive incidences at vertices of $S$, deduplicate encountered edges
with a stamp, and test each encountered edge for containment. A test
stops at the first endpoint outside $S$. Every induced edge has a
positive owner in $S$ and is counted once. Installing this witness
restores $L\ge1/|e|\ge1/r$.

\subsubsection{Budgeted Refresh}
\label{sec:cap-budgeted-refresh}
A denser witness can avoid subsequent repairs. We occasionally run
minimum-degree hypergraph peeling and select the densest remaining
suffix of its removal order. The suffix replaces $C$ only if its
density is strictly larger. During peeling, an edge is removed once,
when its first endpoint is removed, so suffix densities are exact.
The allocation is retained.

The refresh is triggered before a search when accumulated positive
scans exceed $4(n+rm)$. We install the improved witness, reset $D$, and
set $f$ to true (\Cref{ln:update-refresh}). Recomputing $L$ and $c$ may
already certify the current allocation, in which case the repair loop
ends (\Cref{ln:update-refresh-capacity}). At most one refresh occurs in an
update. A pass costs
$O((n+I)\log(2+n+I))$, which is charged to the accumulated scans in
\Cref{sec:total-maintenance-work}. Both normalization and refresh
never decrease the current witness density and therefore cannot create an
overload by lowering the capacity.

%% file: sections/correctness.tex
\subsection{Correctness}
\label{sec:correctness}
Updates and transfers preserve
feasible shares (\Cref{lem:transfer-progress}); successful
searches reduce excess load, while failed searches increase the witness
density (\Cref{lem:closed-witness}); and the stopping condition gives the
approximation certificate (\Cref{lem:allocation-upper}). The only
remaining issue is to show that repair cannot continue indefinitely.
The proof of \Cref{thm:certificate} below establishes this.

\begin{theorem}[All-update guarantee]
\label{thm:certificate}
Let $0<\epsilon\le1$ and $b\ge\lceil2r/\epsilon\rceil$.
For \method{}, every legal update terminates with an explicit set $C$ satisfying
$\rho(C)\le\rho^*\le U\le(1+\epsilon)\rho(C)$.
\end{theorem}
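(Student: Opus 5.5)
We split the argument into an invariant part and a termination part; the termination part is the main obstacle.

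\emph{Invariants.} Throughout \Cref{alg:cap-update} we maintain two facts. First, the shares form a feasible integer allocation with $\sum_{v\in e}x_{e,v}=b$ for every active edge. Water filling, deletion, and path transfers preserve this (\Cref{lem:transfer-progress}); refresh and witness replacement do not touch shares. Second, $M=|E(C)|$ exactly. This holds under insertion and deletion by the containment tests. Normalization and refresh compute the count exactly. A failed search sets $M=\sum_{v\in R}q_v/b=|E(R)|$ by \Cref{lem:closed-witness}.

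\emph{Return implies the certificate.} If $E=\emptyset$, the empty answer is returned with zero bounds, which is correct. Otherwise, after normalization, $L\ge1/r$, so $bL\ge2/\epsilon$ as required by \Cref{lem:closed-witness}. Density only increases afterwards, since refresh installs a strictly denser suffix and failures satisfy $\rho(R)>L$. At return every load satisfies $q_v\le c=\lfloor\alpha bL\rfloor\le(1+\epsilon)bL$. Hence $U=\max_vq_v/b\le(1+\epsilon)L$. We have $L=\rho(C)\le\rho^*$ by definition, and $\rho^*\le U$ by \Cref{lem:allocation-upper}. This gives the claimed chain.

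\emph{Termination.} Within one update the graph is fixed. Refresh runs at most once because of the flag $f$. We bound the two kinds of loop iterations separately.

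\textbf{Failed searches.} Each strictly raises $L$ and hence raises $t$ by \Cref{eq:failure-threshold-growth}. Every witness satisfies $L\le\rho^*$, so $t\le\lfloor\gamma b\rho^*\rfloor$. Since $t$ is a positive integer growing by a factor of at least $1+\epsilon/4$, there are finitely many failures. In fact, $L$ takes values in the finite set of densities $\{\rho(S)\}$, so finiteness is immediate even without the growth bound.

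\textbf{Successful searches.} Between consecutive changes of $L$ (at most finitely many), $c$ is fixed. Each success decreases the nonnegative integer $X_c$ by $a\ge1$ (\Cref{lem:transfer-progress}). This requires checking $a\ge1$: we need $q_z-c\ge1$ since $z$ is overloaded, $c-q_w\ge1$ since $q_w<t\le c$, and positive donor shares, which hold because the path follows positive incidences. When $L$ increases, $c$ does not decrease, so $X_c$ cannot grow. Thus $X_c$ starts at a finite value and is a nonincreasing integer potential.

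\emph{Combining the bounds.} Order iterations lexicographically by the pair (number of distinct witness densities remaining, $X_c$). Every iteration strictly decreases this pair: a failure or refresh decreases the first component, and a success decreases the second. Each search is itself finite, because every vertex is queued once and every edge is expanded once. The loop therefore exits with $\max_vq_v\le c$, completing the proof.
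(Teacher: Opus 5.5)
Your proposal is correct and follows essentially the paper's argument. It uses feasibility via \Cref{lem:transfer-progress}, the excess $X_c$ as an integer potential that never increases during repair and drops by $a\ge1$ per successful transfer, an inner threshold that strictly increases on failures and is bounded by $\lfloor(1+\epsilon/2)b\rho^*\rfloor$, at most one refresh, and the certificate from $c=\lfloor(1+\epsilon)bL\rfloor$ with \Cref{lem:allocation-upper}. Your lexicographic pair repackages the paper's direct count: since $X_c$ never increases, the number of successes is at most the initial excess. Your explicit checks that $a\ge1$ and that $M=|E(C)|$ stays exact spell out details the paper leaves implicit.
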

\begin{proof}
Insertion and removal preserve the allocation of surviving edges;
\Cref{lem:transfer-progress} establishes the same property for transfers.

During repair the graph is fixed. Define excess
$X_c=\sum_v(q_v-c)_+$. Every successful transfer decreases $X_c$ by its
positive integer amount. A witness improvement cannot increase $X_c$.
Recall that $t=\lfloor(1+\epsilon/2)bL\rfloor$.
Normalization gives $bL\ge2/\epsilon$, hence $t\ge2/\epsilon$.
On any failed search, the new witness density $L'$
satisfies $bL'>t$. Therefore
\[
 \lfloor(1+\epsilon/2)bL'\rfloor
 \ge t+\lfloor\epsilon t/2\rfloor\ge t+1.
\]
The witness strictly improves, and the inner threshold increases strictly.
It is bounded by $\lfloor(1+\epsilon/2)b\rho^*\rfloor$ on this fixed
graph. There are finitely many failures, at most the initial excess in
successful searches, and at most one optional refresh per update.
Repair thus terminates with $\max_vq_v\le c$, which yields the claimed
certificate by allocation duality and $c=\lfloor(1+\epsilon)bL\rfloor$.
\end{proof}
We summarize the time and space complexity in \rev{\Cref{sec:hyper-support-candidate}}.
Deriving the maintenance-time bound requires quantifying the search work
and load transport of witness-guided repair, so we devote
\Cref{sec:hyper-support-candidate} to this analysis.

%% file: sections/work.tex
\section{Analysis of Solution-Guided Maintenance}
\label{sec:hyper-support-candidate}
\Cref{thm:certificate} proves that \method{} maintains a
$(1+\epsilon)$-approximate solution after every update.
We now analyze the work required to maintain this guarantee.
A bound that charges every update
for the whole hypergraph hides the main benefit of \method{}: an update
can finish without a search, and a search can stop after inspecting only
a small part of the allocation. We analyze these savings through the
work performed by the maintenance procedure.
We first give a general update-time guarantee, then refine it to
identify the work saved by solution-guided maintenance.

Consider $N\ge1$ legal updates from an empty hypergraph, with
$b=\lceil2r/\epsilon\rceil$ and $0<\epsilon\le1$.
A nonempty initial graph is charged as a sequence of insertions.

\begin{theorem}[General amortized update time]
\label{thm:amortized-update}
\method{} has amortized
update time
\[
 O\!\left(r m_* I_*\,\frac{\log(n+r m_*)+\log(1/\epsilon)}{\epsilon}\right),
\]
where $m_*$ and $I_*$ are the maximum active hyperedge count and
incidence volume over the update sequence, respectively.
The bound excludes the one-time $O(n)$ initialization cost.
\end{theorem}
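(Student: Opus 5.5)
Per-update costs split into a fixed part, searches, transfers, witness installation, normalization and refresh. The fixed part is applying the update, water filling and heap keys. By the complexity analysis of \Cref{sec:cap-update} it costs $O(r[\log(r+1)+\log(b+1)+\log(n+1)])$, which the bound absorbs since $b=O(r/\epsilon)$. A single search costs $O(1+P+J)=O(1+I_*)$ by \Cref{sec:cap-search-time}. Any transfer or witness installation following it costs $O(I_*\log(n+1))$, including heap updates and membership marks. A refresh is charged to the $4(n+rm)$ inspections that triggered it, which cost at least $\Omega(n+rm)$ search work. It therefore adds only an $O(\log(2+n+I_*))$ factor to search work. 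Normalization inspects positive incidences of at most $r$ vertices, so it costs $O(I_*)$ per update. Thus the whole bound reduces to showing that the \emph{number} of searches, amortized over the $N$ updates, is $O\bigl(rm_*(\log(n+rm_*)+\log(1/\epsilon))/\epsilon\bigr)$. Each search then costs $O(I_*)$, and the logarithmic overhead of transfers and refresh is folded into the same logarithmic factor.

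\emph{Successful searches.} By \Cref{lem:transfer-progress}, each success lowers the integer excess $X_c$ by at least one. Within an update, a failure or refresh only raises $c$, which cannot increase $X_c$. So the successes in one update number at most the excess present when the repair loop is entered. An insertion creates at most $b$ units of new overload. A deletion of $e\nsubseteq C$ creates none. A deletion of $e\subseteq C$ lowers $L$ from $M/k$ to $(M-1)/k$. That drop lowers $c$ by roughly $\alpha b/k$ and can create excess up to about $n\alpha b/k+1$ per vertex. This is where I would bound crude excess by total load. Every vertex with $q_v>c$ has load at most the total $bm$, so the excess entering repair is at most $bm_*$. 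This gives $O(bm_*)=O(rm_*/\epsilon)$ successes per update, without using amortization at all.

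\emph{Failed searches.} By \eqref{eq:failure-threshold-growth}, each failure multiplies $t$ by at least $1+\epsilon/4$. Within one update, $t$ runs from $t_0\ge2/\epsilon$ up to at most $(1+\epsilon/2)b\rho^*\le 2bm_*$. So there are $O(\log(\epsilon b m_*)/\epsilon)=O((\log(rm_*)+\log(1/\epsilon))/\epsilon)$ failures per update, which is within budget. Combining the two counts, the searches per update number
\[
O\!\left(\frac{rm_*}{\epsilon}+\frac{\log(rm_*)+\log(1/\epsilon)}{\epsilon}\right).
\]
Each costs $O(I_*)$ plus the logarithmic overhead for transfers and heaps. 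This yields the stated worst-case-per-update, hence amortized, bound.

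\emph{Main obstacle.} The delicate step is the refresh accounting and the claim that it does not dominate. A refresh costs $O((n+I)\log(2+n+I))$, and since $I_*\ge1$ and $n$ may exceed $rm_*$, the $n$ term must be charged to accumulated inspections $D>4(n+rm)$ rather than to the update. I would use a potential $D/(n+rm)$ times the refresh cost. The subtlety is that $m$ shrinks under deletions, so the threshold $4(n+rm)$ changes over time. I would handle this by noting that at trigger time the current $D$ already exceeds the current threshold, so the charge is valid at that moment. The one-time $O(n)$ initialization is excluded, as the statement specifies.
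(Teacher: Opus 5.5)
Your proposal is correct and follows essentially the same route as the paper's proof. The ingredients match: an $O(I_*)$ cost per search; at most $bm_*$ successful searches per update, since the excess entering repair is at most $\sum_v q_v=bm\le bm_*$ and never increases afterwards; $O(\epsilon^{-1}\log(2+rm_*))$ failures per update from the multiplicative threshold growth; $O(I_*)$ per fallback normalization; and each peeling pass charged to its disjoint debt interval $D_j>4(n+rm_j)$, which contributes the $\log(2+n+I_*)$ factor.

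One minor point concerns the failure count. Since $\epsilon b\le 2r+1$, your intermediate count $O(\log(\epsilon b m_*)/\epsilon)$ is already $O(\epsilon^{-1}\log(2+rm_*))=O(rm_*/\epsilon)$, which is how the paper absorbs it. The extra $\log(1/\epsilon)$ you add is unnecessary. Multiplied by the refresh factor $\log(2+n+I_*)$, that extra term would not be absorbed in the regime where $rm_*$ is much smaller than both logarithms.
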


\Cref{thm:amortized-update} follows by bounding every search using
the full active incidence volume and every update using the maximum
possible transport. To capture the savings from local searches and
retaining the current solution, we instead count the work actually
performed.

\begin{theorem}[Maintenance work and queries]
\label{thm:main-guarantee}
After every update, \method{} maintains a set of density at least
$\rho^*/(1+\epsilon)$. Its total maintenance time is

\begin{equation}
 T_{\rm maint}=O\!\left(n+(R_{\rm up}+S_{\rm scan}+Z_{\rm norm})
 \log\frac{n+r m_*}{\epsilon}\right),
 \label{eq:readable-total-work}
\end{equation}

where $R_{\rm up}$ counts the endpoints of updated edges,
$S_{\rm scan}$ counts positive-incidence and endpoint inspections
in residual searches, and $Z_{\rm norm}$ counts the same inspections
when installing fallback witnesses.
Reading its density takes $O(1)$ time, and reporting its vertices takes
$\Theta(1+|C|)$ time per query.
\end{theorem}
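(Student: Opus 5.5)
The plan is to split the claim into correctness and cost. Correctness is \Cref{thm:certificate}, applied after every update, and the query costs follow directly from the stored state. The witness vertices are kept in an explicit list, and $L$ is stored (or $M$ and $|C|$ are), so reading the density costs $O(1)$ and reporting costs $\Theta(1+|C|)$. The real work is the bound \eqref{eq:readable-total-work}. I would charge every primitive operation of \Cref{alg:cap-update} to one of four ledgers: the $O(n)$ initialization, $R_{\rm up}$, $S_{\rm scan}$, or $Z_{\rm norm}$. Each unit of charge should carry at most an $O(\log((n+rm_*)/\epsilon))$ factor. That factor absorbs heap operations ($\log(n+1)$), binary searches in endpoint lists ($\log(r+1)$), and water-filling batches ($\log(b+1)=O(\log(r/\epsilon))$), since $r\le n+rm_*$.

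The charging proceeds in this order.

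\textbf{Update step.} By the complexity paragraph of \Cref{sec:cap-update}, applying an update on an edge of size $s$ costs $O(s\log((n+rm_*)/\epsilon))$, which is charged to $R_{\rm up}$. The empty-graph test and the capacity computation cost $O(1)$ per update. These are also charged to $R_{\rm up}$, which is at least one per update.

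\textbf{Normalization.} Normalization inspects positive incidences at the vertices of $e$ and the endpoints of the encountered edges. This is exactly what $Z_{\rm norm}$ counts. Installing the witness costs $O(|e|)$, which is also covered, because each vertex of $S$ has a positive incidence (it was just examined) or is at least an inspected endpoint.

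\textbf{Searches.} By \Cref{sec:cap-search-time}, a single search costs $O(1+P+J)$, and summing the loads of a failed $R$ costs $O(1+J)$. The additive $1$ is dominated because the root $z$ has $q_z>c\ge0$ and therefore at least one positive incidence, so $P\ge1$. Installing $C\gets R$ costs $O(|R|)=O(1+J)$; clearing old membership marks costs $O(|C_{\rm old}|)$, which I would charge to the search or refresh that installed $C_{\rm old}$. A transfer along a path of $\ell\le J$ steps costs $O(\ell\log(r+1)+\log(n+1))$. This is charged to the search that produced the path, giving $O((P+J)\log n)$. Heap maximum queries in the repair loop cost $O(1)$ per iteration, and every iteration other than the last performs a search or breaks after a refresh. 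The final test is charged to the update.

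\textbf{Refresh.} This is the main obstacle. A peeling pass costs $O((n+I)\log(2+n+I))$, and it is triggered only when $D>4(n+rm)\ge n+I$. Here $D$ is the count of positive-incidence inspections since the last refresh, and these are already counted in $S_{\rm scan}$. Resetting $D$ to zero lets each unit of $S_{\rm scan}$ pay for at most one refresh, contributing $O(1)$ share of $O((n+I)/D)=O(1)$ per inspection, times $\log(n+rm_*)$. The point needing care is that $D$ accumulates across updates while $m$ changes. The trigger compares $D$ against the \emph{current} $n+rm$, so at the moment of the refresh $D\ge n+I_{\rm current}$, and the cost is measured in the current graph; the amortization therefore still holds. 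The flag $f$ guarantees that refresh never fires without accumulated debt.

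Summing the ledgers and adding the $O(n)$ initialization of loads, the heap, and the marks gives \eqref{eq:readable-total-work}.
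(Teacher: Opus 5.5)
Your proposal is correct and follows essentially the paper's own argument. The paper likewise charges edge application to $R_{\rm up}$; search exploration, path transfers (via $\ell\le J$ and at least one inspection per search) and witness installation to $S_{\rm scan}$; and fallback witnesses to $Z_{\rm norm}$ through the fully tested fallback edge. It charges each peeling pass to its disjoint debt interval with $D>4(n+rm)\ge n+I$ in the current graph, and then absorbs every logarithm into $\log((n+rm_*)/\epsilon)$. One small point does not affect the argument: it is the trigger threshold together with the reset of $D$, not the flag $f$, that ensures each refresh is paid for by fresh debt; $f$ only limits refreshes to one per update.
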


The proof of \Cref{thm:amortized-update,thm:main-guarantee} is given in
\Cref{sec:maintenance-details}.
The input term $R_{\rm up}$ depends only on the updated edges.
The search term $S_{\rm scan}$ combines how much a search examines
(\Cref{sec:local-search-work}) with how often repair needs to move load
or replace the solution (\Cref{sec:reserve-transport,sec:total-maintenance-work}).
The normalization term $Z_{\rm norm}$ is paid only when a fallback
solution requires a new induced-edge count.
\Cref{sec:total-maintenance-work} bounds these three terms and combines
them into an amortized update bound.

\paragraph{Memory Analysis}
\method{} stores vertex loads, the indexed heap, witness membership,
and search marks and parent pointers in $O(n)$ words.
Each active incidence stores an allocation share and a constant number
of entries and inverse positions in the incidence lists. Together with
the endpoint arrays and active-edge records, these occupy $O(I)$ words,
where $I=\sum_{e\in E}|e|$ is the current incidence volume.
Searches and peeling use $O(n+I)$ temporary space.
Thus, a compacted representation uses $O(n+I)$ words in total.

\subsection{Bounding Search Work by Locality}
\label{sec:local-search-work}

We first bound $S_{\rm scan}$ in \Cref{eq:readable-total-work}.
A search expands only vertices whose load reaches its destination
threshold $t$. Since $q_v\le b\deg(v)$, let $h$ bound the number of
vertices with $b\deg(v)\ge t$ at the start of any search, and let $Q$
bound the maximum load at those times.
If no search occurs, set $h=Q=0$.
Let $A$ be the total amount transported and $F$ the number of failed
searches. A transfer of $a$ units contributes $a$ to $A$, once per path.

\begin{lemma}[Search work from local regions]
\label[lemma]{lem:local-search-work}
The total search work satisfies
\begin{equation}
 S_{\rm scan}\le(1+\min\{r,h\})hQ(A+F)+A.
 \label{eq:hyper-aggregate-search}
\end{equation}
\end{lemma}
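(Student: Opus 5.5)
The plan is to bound the work of a single search by a quantity depending only on $h$, $Q$ and $r$, and then multiply by the number of searches. Every search is either successful or failed. A successful search is followed by a transfer of some $a\ge1$ units (\Cref{lem:transfer-progress}), and that $a$ is added to $A$. Hence the number of successful searches is at most $A$, the number of failed ones is $F$, and the total number of searches is at most $A+F$.

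\emph{Step 1: expanded vertices are high-degree vertices.} Fix one search with threshold $t$. A vertex is popped from $Q$ and expanded only if it is the root $z$ or it passed the test $q_v\ge t$ in \Cref{ln:search-destination-test}. The root satisfies $q_z>c\ge t$. Since $q_v\le b\deg(v)$, every expanded vertex has $b\deg(v)\ge t$ at the start of the search, and loads are fixed during a search. So at most $h$ vertices are expanded.

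\emph{Step 2: bound the positive-incidence inspections $P$.} Each positive incidence $(e,u)$ has an integer share $x_{e,u}\ge1$. Therefore the number of positive incidences at $u$ is at most $q_u\le Q$. It follows that $P\le hQ$, including the inspections skipped because the edge was already expanded. The number of expanded edges is also at most $hQ$.

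\emph{Step 3: bound the endpoint inspections $J$ (the main obstacle).} The naive bound charges $r$ endpoints per expanded edge. To obtain the factor $\min\{r,h\}$, I will argue which vertices an expansion can meet before the search stops. Every endpoint inspected in an expansion is of one of two kinds.
\begin{itemize}
\item It is already in $R$. Every vertex in $R$ is either the root or passed the test $q_v\ge t$, since otherwise the search would have returned immediately. So it belongs to the set of at most $h$ vertices with $b\deg(v)\ge t$.
\item It is newly discovered. Either it also passes the test, and so lies in that same set, or it fails the test and is the single terminal destination, after which the search returns.
\end{itemize}
Within one expanded edge, the endpoints are distinct. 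Apart from the terminal vertex, they therefore number at most $\min\{r,h\}$. This gives $J\le\min\{r,h\}\,hQ+\mathbf 1[\text{success}]$. The delicate point is that the terminal vertex is inspected but never queued; the extra $+1$ occurs only in successful searches.

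\emph{Step 4: sum over searches.} One search contributes at most $(1+\min\{r,h\})hQ+\mathbf 1[\text{success}]$ inspections to $S_{\rm scan}$. Summing over at most $A+F$ searches bounds the first term by $(1+\min\{r,h\})hQ(A+F)$. The indicator terms total at most the number of successful searches, which is at most $A$. This yields \Cref{eq:hyper-aggregate-search}. When no search occurs we have $h=Q=0$ and $A=F=0$, so the bound holds trivially.
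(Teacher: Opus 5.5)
Your proof is correct and follows essentially the same argument as the paper. The paper likewise confines expanded vertices to $\{v: b\deg(v)\ge t\}$ and uses integer shares to get $P\le hQ$. It bounds endpoint tests by $\min\{r,h\}$ per expanded edge plus one extra test for the terminal destination of a successful search, and then charges the at most $A$ successful searches (each transfers $a\ge1$) together with the $F$ failed ones.
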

The proof of \Cref{lem:local-search-work} is given in
\Cref{sec:local-search-proof}.
The factor $(1+\min\{r,h\})hQ$ bounds the work of one search,
apart from its final destination test. The factor $A+F$ bounds how
many searches occur. Thus a small high-load region reduces the work
per search; the next subsection bounds the transport that can cause
searches to repeat.

A large crossing edge need not cause a
full endpoint scan: only its prefix up to the first eligible
destination is tested. The bound still counts every inspected
positive incidence, including one whose edge was expanded earlier
in the same search.

For example, suppose only 20 vertices meet the degree threshold for
expansion. Every fully scanned edge then has at most 20 endpoints.
Even if the final edge has 1,000 endpoints, the search tests at most
21 of them before finding a destination. Thus the endpoint multiplier
is controlled by the small searchable region, not by the largest edge.
The factor $Q$ still accounts for how many positive shares these
vertices may own.

This situation can arise when a small, highly active community is
surrounded by many low-degree vertices. Once the maintained solution
sets a sufficiently high threshold, those low-degree vertices cannot
be expanded. The bound therefore explains how a large hypergraph can
admit inexpensive local searches; it does not require the whole graph
to be small.

\subsection{Bounding Transport by Density Changes}
\label{sec:reserve-transport}
\label{app:reserve-proof}

To make \Cref{lem:local-search-work} useful, we next bound $A$.
An insertion adds only $b$ units of load and never lowers the retained
solution density, so it transports at most $b$ units.
A deletion can lower the solution density and hence the capacities.
We charge its transport according to the density loss and the load
exposed to the lower threshold.

Write $s=\alpha bL$ and $\beta=\gamma/\alpha$.
For a deletion $d$, let $s_{\rm old}$ be the previous scale,
$\widetilde q$ the loads just after removing the edge, and $s_0$
the scale after witness normalization but before transport.
If the graph becomes empty or $s_0\ge s_{\rm old}$, set $Z_d=0$;
otherwise define
\begin{equation}
 Z_d=\frac{1+2/s_{\rm old}}{b}
       \left(\frac{s_{\rm old}}{s_0}-1\right)
       \sum_{\widetilde q_v+1>\beta s_0}\widetilde q_v.
 \label{eq:hyper-actual-scale-exposure}
\end{equation}
The ratio measures the relative drop in solution density.
The sum includes only load exposed to the new threshold, and division
by $b$ expresses this load in edge units. Thus a small density drop
affecting little load has a small exposure charge $Z_d$.

\begin{lemma}[Transport controlled by density loss]
\label[lemma]{lem:hyper-all-deletions}
For an update sequence starting from empty, with
$b=\lceil2r/\epsilon\rceil$ and $0<\epsilon\le1$, total transport satisfies
\begin{equation}
 A=O\!\left(\frac r\epsilon N_+
             +\frac r{\epsilon^2}\sum_d Z_d\right),
 \label{eq:exposure-transport}
\end{equation}
where $N_+$ is the number of insertions and $\sum_d Z_d$ is the total
exposure charge over all deletions in the update sequence.
\end{lemma}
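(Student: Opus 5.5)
We bound $A$ with a potential argument over the whole update sequence. Every transferred unit lowers the integer excess $X_c$ by one (\Cref{lem:transfer-progress}). So $A$ equals the total excess created and then removed. Failed searches, refreshes and normalizations only raise $c$, so they never create excess. Excess therefore arises only in two ways: an insertion's water filling, or a drop in capacity after a deletion $e\subseteq C$ or a normalization that lowers $L$. A naive count of the excess each event creates is too weak, because a deletion can expose load on many vertices at once. Instead, we use a potential built on the reserve gap between $t$ and $c$.

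Define, with $s=\alpha bL$ and $\beta s=\gamma bL\approx t$,
\[
\Phi=\frac{1}{\epsilon}\sum_v\bigl(q_v-\beta s\bigr)_+ .
\]
This measures load sitting above the inner threshold. We then check three effects.

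\emph{Transfers.} Each transfer of $a$ units moves load from the root $z$, which has $q_z>c\ge s-1$, to a destination $w$ with $q_w<t\le\beta s$. We want the reserve gap to give a net decrease of $\Phi$ per unit moved. When $w$ ends below $\beta s$, this holds: the root sits above $\beta s$ and loses $a$, while $w$ stays out of the sum. When $w$ is pushed above $\beta s$ (up to $c$), the drop in $\Phi$ can vanish. Here we plan to use the multiplier $1/\epsilon$ and the per-unit accounting, $q_z-\beta s\ge(1-\beta)s\approx\epsilon s/(2\alpha)$, to show that the excess removed is paid for in amortized form, at a cost of $O(1/\epsilon)$ potential per unit.

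\emph{Insertions.} An insertion adds $b$ units, raising $\Phi$ by at most $b/\epsilon=O(r/\epsilon^2)$. Tightening the accounting should give the stated $O(r/\epsilon)N_+$; otherwise we accept a looser constant and rescale.

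\emph{Witness improvements.} Failures and refreshes raise $s$, so $\Phi$ does not increase.

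\emph{Deletions.} This is the core case. When the scale drops from $s_{\rm old}$ to $s_0$, each term $(\widetilde q_v-\beta s)_+$ rises by at most $\beta(s_{\rm old}-s_0)$. Only vertices with $\widetilde q_v>\beta s_0$ contribute, after allowing an off-by-one for integrality, which is the condition $\widetilde q_v+1>\beta s_0$ in $Z_d$. For each such $v$ we have $\beta(s_{\rm old}-s_0)\le\widetilde q_v\,(s_{\rm old}/s_0-1)$. The factor $(1+2/s_{\rm old})$ absorbs the floors in $c$ and $t$, using $s\ge 2/\epsilon$. Summing over these vertices, the increase in $\Phi$ is at most $\frac{b}{\epsilon}Z_d=O\bigl(\frac{r}{\epsilon^2}Z_d\bigr)$, after converting back with the factor $b$.

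Since $\Phi\ge0$ and starts at zero from an empty graph, the total drop in $\Phi$ across all transfers is at most its total increase. Combined with the per-unit drop from transfers, this yields $A=O\bigl(\frac r\epsilon N_++\frac r{\epsilon^2}\sum_d Z_d\bigr)$.

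The main obstacle is the transfer case. A destination may legitimately end between $t$ and $c$, and the root's reduction must still dominate. If the direct potential fails, we will first switch to a charging argument. Each unit delivered into the band $[t,c]$ reduces that vertex's remaining reserve. That reserve can only be replenished by an insertion, by a deletion's scale loss (which is already charged through $Z_d$), or by removal of load from the vertex. A second fallback is a per-vertex potential $(q_v-t)_+$ weighted by $1/(c-t)$.
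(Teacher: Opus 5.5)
The gap is the transfer step. You flag it as the main obstacle but leave it unresolved, and it does fail for your potential. Your potential $\frac1\epsilon\sum_v(q_v-\beta s)_+$ is linear, with its kink at the destination threshold $\beta s\approx t$.

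Take a destination with $q_w=t-1$ and a transfer that fills it up to $c$. Its term rises by $(c-\beta s)/\epsilon$, which almost cancels the root's drop of $a/\epsilon$. The net release is $(\beta s-q_w)/\epsilon<2/\epsilon$, and this does not grow with $a$.

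Yet $a=\min\{q_z-c,\,c-q_w,\,\min_ix_i\}$ can be of order $\min\{b,\epsilon bL\}$, for example after a large scale drop with the root owning a whole edge. So the release per transported unit can be as small as order $1/(\epsilon b)=\Theta(1/r)$. At that rate, your increases of $b/\epsilon$ per insertion and $bZ_d/\epsilon$ per deletion yield only $A=O\bigl(\frac{r^2}{\epsilon^2}(N_++\sum_dZ_d)\bigr)$.

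The height bound $q_z-\beta s\ge(1-\beta)s$ cannot rescue this. A linear potential releases exactly $1/\epsilon$ per unit at the root, however high it sits. Your fallback $(q_v-t)_+/(c-t)$ also has its kink at $t$ and fails the same way, and the charging fallback is not carried out. Separately, ``accepting a looser constant'' for insertions is not an option. A deposit of $b/\epsilon$ per insertion overshoots the $O(\frac r\epsilon N_+)$ term by a factor $1/\epsilon$, not a constant, unless the per-unit release is $\Omega(1/\epsilon)$.

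What is missing is a potential whose release grows with $a$. The paper uses a convex one, $\Phi_s(q)=\frac1s\sum_v(q_v-\beta s+1)_+^2$, with $a_s=\beta s-1$. The root ends at $q_z-a\ge c$, so it releases $2a(q_z-a-a_s)+a^2\ge2a(c-a_s)+a^2$. The destination starts at most at $a_s$, so it gains at most $a^2$. Each transfer therefore releases at least $2a(c-a_s)/s\ge2a(1-\beta)=\delta a$, where $\delta=\epsilon/\alpha$. A tangent-extension Lipschitz argument caps an insertion's increase at $2\delta b$. The derivative of a summand in $s$ caps a deletion's scale drop at $bZ_d$. Telescoping then gives $A\le3bN_++(b/\delta)\sum_dZ_d$.

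Alternatively, your linear route can be repaired. Drop the $1/\epsilon$ and move the kink strictly inside the reserve band: take $\Psi_s=\sum_v(q_v-\theta_s)_+$ with $\theta_s=\frac{1+\beta}2s-1$. Then $c>\theta_s>q_w$. Since $(1-\beta)s\ge1$, one gets $\theta_s-q_w\ge\frac14(s-q_w)\ge a/4$, so every transfer releases $\min\{a,\theta_s-q_w\}\ge a/4$. Insertions add at most $b$. Your own scale-drop computation bounds a deletion's increase by $\frac75bZ_d$: since $s_0\ge2/\delta\ge4$ and $\beta\ge3/4$, every active load lies above $\frac57\cdot\frac{1+\beta}2s_0$. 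This gives $A\le4bN_++\frac{28}5b\sum_dZ_d$, which implies the lemma.
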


The insertion term depends only on the new mass.
The deletion term depends on the actual density loss and exposed
load, rather than on all active edges at every deletion.
For example, deleting an edge outside the current solution gives
$Z_d=0$. If the solution contains $M$ edges and loses one while
retaining its vertices, its density falls by a fraction $1/M$.
For $M\ge2$, normalization can only improve this remaining density,
so $s_{\rm old}/s_0-1\le1/(M-1)$.
A solution with 100 edges therefore incurs a relative-scale charge
of at most $1/99$, whereas a two-edge solution can incur a charge of
one. The same single-edge deletion can thus have very different
effects on repair.

Only vertices whose load exceeds the shifted new threshold contribute
to the other factor in $Z_d$. Many low-load vertices do not increase this charge.
These two effects explain when aggregate transport can be small:
the dense solution persists through many updates, and its occasional
density losses expose little load. This is a plausible pattern in a
temporal interaction stream where individual interactions expire
while a well-connected group remains active.
We define a potential that assigns a quadratic cost to loads above
a shifted destination threshold. Each transfer decreases this
potential by an amount proportional to the mass transported.
Bounding the increases caused by insertions and density losses,
then summing over the update sequence, bounds total transport.
The full proof of \Cref{lem:hyper-all-deletions} is given in
\Cref{sec:transport-proof}.

\Cref{lem:transport-boundary} retains the boundary potentials for
a nonempty initial state. The degree-tail specialization in
\Cref{sec:degree-tail-specialization} further bounds $Z_d$ using
$\widetilde q_v\le b\deg_{\rm after}(v)$: only vertices of sufficiently
high degree can contribute exposed load.
In an $r$-uniform hypergraph, deleting the sole induced edge of a
normalized witness has zero exposure, as shown in
\Cref{lem:transport-boundary}.

Combining the transport bound with \Cref{lem:local-search-work}
now gives the search-work bound used in the running-time analysis.
\begin{lemma}[Search work under solution-guided maintenance]
\label[lemma]{lem:exposure-search-work}
Under the assumptions of \Cref{lem:hyper-all-deletions}, with $h,Q,F$
as in \Cref{lem:local-search-work},
\begin{equation}
\begin{aligned}
 S_{\rm scan}=O\!\Bigl(&(1+\min\{r,h\})hQ\\
 &{}\cdot\Bigl[F+\frac r\epsilon N_+
                     +\frac r{\epsilon^2}\sum_d Z_d\Bigr]\Bigr).
\end{aligned}
\label{eq:exposure-search-work}
\end{equation}
\end{lemma}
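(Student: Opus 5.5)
This lemma is a direct composition of the two preceding results, and I would prove it by substitution.

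First, take \Cref{lem:local-search-work} as the starting point:
\[
S_{\rm scan}\le(1+\min\{r,h\})hQ(A+F)+A.
\]
The parameters $h$, $Q$ and $F$ are defined exactly as in that lemma, and the same execution is analyzed. The only remaining quantity is the total transport $A$.

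Second, under the hypotheses of \Cref{lem:hyper-all-deletions} (start from empty, $b=\lceil2r/\epsilon\rceil$, $0<\epsilon\le1$), that lemma gives
\[
A=O\!\left(\frac r\epsilon N_+ +\frac r{\epsilon^2}\sum_d Z_d\right).
\]
Substituting this into $(A+F)$ yields the bracketed factor in \Cref{eq:exposure-search-work}, multiplied by $(1+\min\{r,h\})hQ$.

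The one point that needs an argument is the trailing additive $A$ in \Cref{eq:hyper-aggregate-search}. I would absorb it into the main term by splitting into two cases.

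\textbf{Case 1: $A=0$.} The additive term vanishes, and the claim reduces to the substitution above.

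\textbf{Case 2: $A>0$.} Then at least one successful search occurred, so $h\ge1$ by the definition of $h$ (every search has at least the root counted). It remains to check $Q\ge1$. The root $z$ of any search satisfies $q_z>c\ge0$, and $z$ is among the vertices with $b\deg(z)\ge q_z>c\ge t$. So $z$ is counted in $h$ and has positive load, giving $Q\ge1$. Hence $(1+\min\{r,h\})hQ\ge2$, and therefore $A\le(1+\min\{r,h\})hQ\,A$.

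Combining the two cases, the additive term is dominated by the product term. The expected obstacle is only this bookkeeping: confirming that $h,Q\ge1$ whenever a search happens, and that $F$ enters linearly and is not hidden inside $A$. Neither requires new ideas.

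The asymptotic constants are inherited from \Cref{lem:hyper-all-deletions}. No boundary potentials appear, because the sequence starts from the empty hypergraph.
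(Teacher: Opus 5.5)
Your proposal is correct and matches the paper's proof: substitute the transport bound of \Cref{lem:hyper-all-deletions} into \Cref{eq:hyper-aggregate-search}, then absorb the additive $A$ using $h,Q\ge1$ whenever a search occurs. Splitting cases on $A$ rather than on whether any search occurs is an inessential variation, since with no search $h=Q=A=F=S_{\rm scan}=0$.
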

\begin{proof}[Proof of \Cref{lem:exposure-search-work}]
Substitute \Cref{eq:exposure-transport} in
\Cref{eq:hyper-aggregate-search}. If a search occurs, $h,Q\ge1$,
so the additive $A$ is absorbed by
$(1+\min\{r,h\})hQ(A+F)$. Otherwise $S_{\rm scan}=A=F=0$.
\end{proof}

The bound separates two sources of savings: a small high-load region
limits each search, and limited density loss reduces the repeated
transport caused by deletions. The failed-search count $F$ is
bounded next.

\subsection{From Repair Work to Running Time}
\label{sec:total-maintenance-work}

We now combine the search-work bound with the cost of applying updates
and normalizing fallback solutions.
Let $N_{\rm fb}$ count updates that install a fallback solution with
an unknown induced-edge count.

\begin{lemma}[Combined amortized update time]
\label[lemma]{lem:combined-update-time}
For the update sequence of \Cref{lem:exposure-search-work}, the amortized
update time, excluding initialization, is
\begin{equation}
\begin{aligned}
 O\!\Biggl(\Biggl[&r+\frac{(1+\min\{r,h\})hQ}{N}\\
 &{}\cdot\Bigl(F+\frac r\epsilon N_+
                    +\frac r{\epsilon^2}\sum_d Z_d\Bigr)\\
 &{}+I_*\frac{N_{\rm fb}}{N}\Biggr]
 \log\frac{n+rm_*}{\epsilon}\Biggr).
\end{aligned}
\label{eq:refined-amortized-work}
\end{equation}
\end{lemma}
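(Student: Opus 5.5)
The plan is to start from the total-time bound of \Cref{thm:main-guarantee},
\[
T_{\rm maint}=O\!\left(n+(R_{\rm up}+S_{\rm scan}+Z_{\rm norm})\log\frac{n+rm_*}{\epsilon}\right),
\]
bound each of $R_{\rm up}$, $S_{\rm scan}$ and $Z_{\rm norm}$ separately, and divide by $N$. The additive $O(n)$ is the initialization cost and is excluded from the statement.

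\emph{Update term.} Every updated edge has at most $r$ endpoints, so $R_{\rm up}\le rN$. Dividing by $N$ gives the leading $r$ inside the bracket.

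\emph{Search term.} \Cref{lem:exposure-search-work} applies directly, since its assumptions are exactly those of \Cref{lem:hyper-all-deletions}, which the statement inherits. Dividing \Cref{eq:exposure-search-work} by $N$ gives the middle term $\frac{(1+\min\{r,h\})hQ}{N}\bigl(F+\frac r\epsilon N_++\frac r{\epsilon^2}\sum_dZ_d\bigr)$ unchanged.

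\emph{Normalization term.} This is the step I expect to need the most care. I would argue the following.
\begin{itemize}
\item A fallback installation whose induced-edge count must be computed inspects positive incidences at the vertices of $S=e$ and the endpoints of each deduplicated encountered edge.
\item Each stamped edge is expanded at most once, and each incidence is inspected at most once.
\item Hence one installation costs $O(I)\le O(I_*)$.
\end{itemize}
I must also check that each counted update installs at most one such witness. Normalization happens only at \Cref{ln:update-normalize}, once per update. A failed search installs $R$ with count $\sum_{v\in R}q_v/b$ from \Cref{lem:closed-witness}, so it needs no containment scan and falls under the search cost $O(1+J)$ of \Cref{sec:cap-search-time}. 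This gives $Z_{\rm norm}\le O(I_*N_{\rm fb})$, which yields the last term $I_*N_{\rm fb}/N$.

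\emph{Costs outside the three counters.} The remaining work is transfers, heap keys, witness installation after failures, and refresh. I would verify that \Cref{thm:main-guarantee} already folds these into the three counters times the logarithmic factor; its proof in \Cref{sec:maintenance-details} is assumed.
\begin{itemize}
\item \emph{Refresh.} It costs $O((n+I)\log(2+n+I))$ and is triggered only after more than $4(n+rm)$ accumulated scans, so it is charged to $S_{\rm scan}$. Since $I\le rm_*$, the logarithm is $O(\log\frac{n+rm_*}{\epsilon})$.
\item \emph{Transfers.} A path has length at most the endpoint inspections of its search, so transfers are covered by $S_{\rm scan}$ up to the $\log(r+1)+\log(n+1)$ factor.
\end{itemize}
If anything is missing here, the main subtlety is whether installing $R$ as the witness, with its membership-mark updates, costs $O(|R|)$ plus clearing the old $C$. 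Clearing the old $C$ would be charged to the search that created that witness, or to its normalization, so every term is paid once.

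Combining the three terms with the common logarithmic factor and dividing by $N$ gives \Cref{eq:refined-amortized-work}.
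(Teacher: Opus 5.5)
Your proposal is correct and follows essentially the same route as the paper. You bound $R_{\rm up}\le rN$, substitute \Cref{lem:exposure-search-work} for $S_{\rm scan}$, and bound $Z_{\rm norm}=O(I_*N_{\rm fb})$ using stamped deduplication and at most one unknown-count fallback per update, then substitute into \Cref{eq:readable-total-work} and divide by $N$; your extra checks on transfers, heap keys, refresh, and membership marks are consistent with the paper, which simply relies on these costs already being included in \Cref{eq:readable-total-work}.
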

\begin{proof}[Proof of \Cref{lem:combined-update-time}]
Applying an update inspects its endpoints, so
$R_{\rm up}=\sum_{\rm updates}|e|\le rN$.
\Cref{lem:exposure-search-work} bounds $S_{\rm scan}$.
Each fallback scans a positive incidence at most once and stamps each
encountered edge before testing its endpoints. Its incidence and
endpoint inspection counts are therefore each at most $I_*$, giving
$Z_{\rm norm}\le2I_*N_{\rm fb}$.
There is at most one such fallback per update; failed searches and
peeling already supply their induced-edge counts.
Substitute these three bounds in \Cref{eq:readable-total-work},
which includes path transfers, heap maintenance, and budgeted peeling.
Excluding initialization and dividing by $N$ proves the result.
\end{proof}

The expression has three parts: the size of the updated edge, local
search work averaged over the sequence, and fallback cost multiplied
by how often it occurs. Its middle term combines the two preceding
lemmas: a small high-load region lowers the cost per search, while
limited density loss restricts how much transport needs to repeat.
The last term is small when the maintained solution rarely needs
fallback normalization.

Failed searches also make progress: each raises the inner threshold
by a factor of at least $1+\epsilon/4$
(\Cref{eq:failure-threshold-growth}), giving
$O(\epsilon^{-1}\log(2+rm_*))$ failures per update.
The sequence-wide bound in \Cref{lem:search-counts} further charges
repeated threshold increases to decreases caused by deletions.

For example, deleting an edge outside the current witness leaves its
density and capacity unchanged and can only lower loads. Such an update
has no search or normalization cost, so its maintenance time is
$O(|e|\log((n+rm_*)/\epsilon))$.
The counter bound in \Cref{eq:readable-total-work} captures this case
directly, while \Cref{eq:refined-amortized-work} bounds the contributions
over the full sequence.

More generally, a stream can be large while most changes preserve a
dense community and the repairs remain confined to a small surrounding
region. Occasional changes within that community can also be inexpensive
when it contains many edges, as the density-loss example in
\Cref{sec:reserve-transport} illustrates.
These patterns offer a plausible explanation for the observed
efficiency of solution-guided maintenance.
The running-time results (\Cref{sec:main-results}) and the reductions
in positive scans in the capacity-slack and reserve-destination
ablation (\Cref{sec:ablation}) are consistent with this explanation
of the observed efficiency.

%% file: sections/evaluation.tex
\section{Evaluation}
\label{sec:evaluation}
We evaluate the efficiency and solution quality of \texttt{CAP} on real-world
hypergraph datasets and ordinary graphs ($r=2$). We first introduce the
datasets, baselines, and query schedules (\Cref{sec:experiment-setting}),
and compare running times on dynamic hypergraphs
(\Cref{sec:main-results}). We then validate the approximation quality of the
returned solutions (\Cref{sec:validation-records}), study scalability with
query frequency (\Cref{sec:query-frequency}) and data size
(\Cref{sec:data-size}), examine the effect of the
accuracy parameter (\Cref{sec:precision}),
evaluate witness-guided repair through an ablation study (\Cref{sec:ablation}),
and compare performance on ordinary graphs (\Cref{sec:ordinary-graphs}).

\subsection{Experiment Setting}
\label{sec:experiment-setting}
\paragraph{Datasets.}
We select five publicly-available real-world hypergraph datasets from the
Cornell collection~\cite{Benson-2018-simplicial}\footnote{\url{https://www.cs.cornell.edu/~arb/data/}}:
\texttt{DW}, \texttt{MA}, \texttt{AU}, \texttt{GE}, and \texttt{SO}. \texttt{DW} records drug
co-occurrences; \texttt{MA}, \texttt{AU}, and \texttt{SO} contain tags assigned to
online questions; \texttt{GE} records publication coauthorship.
\Cref{tab:calendar-data} shows the statistics of these hypergraph data.
We remove hyperedges containing only one vertex and repeated hyperedges
that are already active, same as in \texttt{Udshp}~\cite{hd_bera2022}.
We use the same calendar-based query schedule as
\texttt{Udshp}~\cite{hd_bera2022}.
Specifically, the update operations are generated in the following manners:
\begin{itemize}[leftmargin=*]
  \item \textbf{Quarter-based windows.} \texttt{DW} provides quarterly timestamps.
  We insert hyperedges in temporal order, delete those from the preceding
  quarter, and issue queries at quarter boundaries.
  \item \textbf{Timestamp-based windows.} \texttt{MA}, \texttt{AU}, and \texttt{SO}
  provide event timestamps. We normalize them into 90-day bins, retain bin
  indices greater than one, and maintain a one-bin window. Incoming tag sets
  generate insertions, expired hyperedges generate deletions, and queries
  occur at bin boundaries, including bins without new events.
  \item \textbf{Year-based windows.} \texttt{GE} provides publication years.
  We maintain a ten-year window, insert incoming publications, delete expired
  hyperedges, and issue annual queries, including years without new
  publications.
\end{itemize}
These calendar boundaries yield 32, 29, 33, 210, and 35 queries on
\texttt{DW}, \texttt{MA}, \texttt{AU}, \texttt{GE}, and \texttt{SO}, respectively.
Queries at empty calendar bins are retained, so several queries can follow
the same update.

For the query-frequency experiment, we keep the update stream fixed and
increase the original number of queries, $Q_0$, to $4Q_0$ and $16Q_0$.
We divide the update sequence into $15Q_0$ equal-length intervals and add
one query after the midpoint update of each interval. If that position is
already queried, we select the nearest unused update position, breaking
ties toward the earlier update. The $16Q_0$ schedule contains all these
queries and all original queries. The $4Q_0$ schedule retains the middle
added query from each consecutive group of five, together with all original
queries. We use 32, 128, and 512 queries on \texttt{DW}, and 210, 840,
and 3,360 queries on \texttt{GE}.

\input{sections/calendar_data}

We also use six ordinary graph datasets: \texttt{EM}, \texttt{HT}, \texttt{WV}, \texttt{BC},
\texttt{AS}, and \texttt{SW}.\footnote{\url{https://snap.stanford.edu/data/};
the dynamic \texttt{AS} and \texttt{SW} streams are available from
\url{https://dyreach.taa.univie.ac.at/}.}
For \texttt{EM}, \texttt{HT}, and \texttt{WV}, we initialize half of the edges and generate
20,000 random edge insertions or deletions on the source topology.
For \texttt{BC}, we process ratings in timestamp order using a 90-day window.
\texttt{AS} uses changes between consecutive snapshots, and \texttt{SW} uses
timestamped hyperlink insertions and deletions. We project these data to
undirected, unweighted graphs; in \texttt{SW}, we exclude pairs with an
unmatched deletion in their chronological history.
The last six rows of \Cref{tab:calendar-data} summarize these graphs.

\paragraph{Baselines.}
The comparison includes \texttt{CAP} and the following five baselines.
\begin{itemize}[leftmargin=*]
  \item \texttt{Udshp}~\cite{hd_bera2022}: Our primary dynamic baseline
  directly addresses the same densest-subhypergraph objective under
  hyperedge insertions and deletions. To the best of our knowledge, the
  authors' implementation is the most recent publicly available
  implementation for this problem. It maintains a hypergraph orientation
  and identifies a dense set through local endpoint-load conditions.
  \item \rev{\texttt{CQ}~\cite{chekuri2024}: This theoretical
  paper studies ordinary graphs in its main results. A preliminary version
  discusses a hypergraph extension in Section~6, omitting the deletion
  pseudocode.\footnote{\url{https://arxiv.org/abs/2210.02611}}}
  We could not locate an accompanying implementation, so we implement
  its basic amortized core and hypergraph extension in C++.
  Our version uses unit-copy updates, shared multiplicities, nested label
  buckets, and load-threshold extraction. We complete hypergraph deletion
  by selecting a global minimum-load endpoint and processing changed
  endpoints through a worklist. Implementation details and scope are
  given in \Cref{app:cq-reconstruction}.
  \item \texttt{Exact}~\cite{hd_bera2022}: This linear-programming algorithm
  solves the current hypergraph using GLOP and extracts a
  dense set from the positive-support vertices.
  \item \texttt{DI}~\cite{hd_huang2024hyper}: Density Improvement repeatedly
  solves flow subproblems to increase the density. We use its exact
  global solver.
  \item \texttt{ExactFlow}~\cite{hd_bengali2026}: This flow algorithm
  maximizes density under convex, monotonic hyperedge rewards. We use the
  reward vector $[0,\ldots,0,1]$ for fully contained-edge density.
\end{itemize}
We exclude \texttt{HWC}~\cite{hd_hu2017} from the baselines mainly because
its rank-dependent $r^2(1+\epsilon_H)$ guarantee does not ensure the
$(1+\epsilon)$ approximation required in our comparison.
We also considered an adaptation of \texttt{IDH}~\cite{hd_leng2026decomposition},
a dynamic integer-density decomposition method. Our adaptation uses
$\delta=1$, modified load-based path-reversal maintenance, and extraction
of the highest-layer vertex set, with
$b=\lceil(1+\epsilon)r/\epsilon\rceil$ explicit copies per hyperedge to meet
our approximation target. Even at $\epsilon=1$, this variant did not
complete the full update-and-query stream on \texttt{AU}, \texttt{MA}, or
\texttt{DW} within a 300-second wall-time limit in our single-run pilots.
We therefore exclude it from the baseline comparison.

\paragraph{Setting.}
We run experiments on a machine having an Intel Core i9-10850K CPU and
32 GB of memory, with Windows 11 installed. C++ implementations are compiled
with GCC 12.1 and optimization enabled. The two static flow solvers use
Julia 1.9.4 with one thread, and \texttt{Exact} uses OR-Tools 9.11.4210.
We run experiments serially and report mean total running times only
after three complete, independently validated repetitions. Times are rounded
to two decimal places, and the smallest reported mean is shown in bold.
For hypergraph queries returning full vertex sets, total running time
covers maintenance and queries,
including static-solver input construction and materialization of returned
vertices. It excludes input parsing, initialization, and independent
verification; Julia warmup precedes timing. The main,
query-frequency, and accuracy-parameter comparisons use a 1,800-second
whole-process limit, while ablation runs use a 300-second limit.

\paragraph{Parameters.}
For a given $\epsilon$, \texttt{CAP} targets density
$\rho^*/(1+\epsilon)$ and uses reserve destinations,
$b=\lceil2r/\epsilon\rceil$, and budgeted peeling.
For \texttt{Udshp}, we use the same numerical $\epsilon$ as its native
experimental parameter, with copy count $\operatorname{round}(\epsilon^{-2})$
as in the released implementation~\cite{hd_bera2022}.
For \texttt{CQ}, we derive fixed copy and local-balance parameters
from the requested target $\rho^*/(1+\epsilon)$
(\Cref{app:cq-reconstruction}).
The three static methods solve each query snapshot exactly from scratch and have no
approximation parameter. We therefore report one mean per dataset, shared
across the five $\epsilon$ rows. We set \texttt{DI}'s numerical flow tolerance
to $10^{-8}$ and \texttt{ExactFlow}'s to $10^{-6}$.

\subsection{Efficiency on Dynamic Hypergraphs}
\label{sec:main-results}
We compare running times at $\epsilon\in\{1,0.7,0.5,0.2,0.1\}$.
\input{sections/unified_results}
\texttt{CAP} has the lowest running time in all 25 settings in
\Cref{tab:unified-time}, while meeting the requested approximation
after every update. Against the dynamic baselines, it is
\MainSpeedup{}$\times$ faster than \texttt{Udshp} in the
\NativeComparisonCells{} completed native-parameter comparisons.
Against \texttt{CQ}, the four completed comparisons yield
$2343.5$--$5926.6\times$ speedups.
At $\epsilon=0.1$, both dynamic baselines have TL entries on every
dataset, whereas \texttt{CAP} completes all five streams.
Tightening $\epsilon$ from 1 to $0.1$ raises \texttt{CAP}'s time by
only $1.5$--$5.9\times$ across these datasets.

The advantage also holds against exact snapshot recomputation, despite
the relatively sparse calendar queries. At $\epsilon=0.1$,
\texttt{CAP} is $3.4$--$259.6\times$ faster than \texttt{DI} and
$5.4$--$246.9\times$ faster than \texttt{ExactFlow}.
The largest gains occur on \texttt{GE}, where repeatedly solving the
query snapshots is expensive. Even on \texttt{DW}, which has only 32
queries, \texttt{CAP} is $41.2\times$ faster than \texttt{Exact}.
Retaining a valid solution therefore benefits both update-heavy streams
and streams with expensive query snapshots.

\subsection{Quality Validation}
\label{sec:validation-records}
\input{sections/validation_records}

\input{sections/query_frequency}

\input{sections/size_scalability}

\subsection{Effect of the Accuracy Parameter}
\label{sec:precision}
We vary $\epsilon$ over $1$, $0.7$, $0.5$, $0.2$, $0.1$, $0.08$,
$0.06$, $0.04$, $0.02$, and $0.01$ on \texttt{GE}, the dataset with
the most vertices.
\input{sections/geology_precision}
\Cref{fig:geology-precision} shows that tightening accuracy by two
orders of magnitude increases \texttt{CAP}'s time by only $2.6\times$.
At $\epsilon=0.01$, it completes in 6.36 seconds and its minimum
density ratio exceeds the required $1/1.01$.
In contrast, \texttt{Udshp} has TL entries from $\epsilon=0.1$
onward, and \texttt{CQ} already has TL at $\epsilon=1$.
Even at the strictest setting, \texttt{CAP} remains $151.0\times$
and $143.5\times$ faster than \texttt{DI} and \texttt{ExactFlow},
respectively. Thus, high accuracy preserves a substantial advantage
on this large hypergraph without requiring a proportional increase
in running time.

This robustness is dataset dependent: at $\epsilon=0.01$,
\texttt{DI} is faster on \texttt{DW}, \texttt{MA}, and \texttt{AU},
and \texttt{ExactFlow} is faster on \texttt{MA} and \texttt{AU}.

\input{sections/ablation}

\subsection{Results on Ordinary Graphs}
\label{sec:ordinary-graphs}
We also evaluate \texttt{CAP} on six ordinary graph streams with
$\epsilon=0.1$. We compare it with
\texttt{ImprovedDynOpt}~\cite{grossmann2025exact} and
\texttt{INS/DEL}~\cite{zhang2024pseudo}, which maintain an optimal graph
orientation and pseudoarboricity, respectively.
We sum the recorded running times over the complete update stream, including
the initial edge insertions. The measured cost includes \texttt{CAP}'s updates
and the baselines' updates plus scalar queries every 100 operations;
\texttt{CAP}'s scalar reads and vertex enumeration are untimed.
We check solution quality at sampled queries.
\input{sections/ordinary_graphs}
\input{sections/ordinary_discussion}

%% file: sections/calendar_data.tex
\begin{table}[t]
\centering\small
\setlength{\tabcolsep}{1.5pt}
\caption{Datasets used in experiments. $n$: number of vertices; $r$: maximum edge size; $m_{\max}$: peak number of active edges.}
\label{tab:calendar-data}
\begin{tabular}{lrrrrr}
\toprule
Dataset (Abbr.) & $n$ & $r$ & $m_{\max}$ & Updates & Queries\\
\midrule
\texttt{DAWN (DW)} & 2,290 & 16 & 11,544 & 529,412 & 32\\
\texttt{Math (MA)} & 1,627 & 5 & 21,349 & 611,274 & 29\\
\texttt{AskUbuntu (AU)} & 3,021 & 5 & 10,458 & 382,007 & 33\\
\texttt{Geology (GE)} & 1,087,111 & 25 & 409,152 & 1,471,530 & 210\\
\texttt{StackOverflow (SO)} & 49,918 & 5 & 351,521 & 15,897,567 & 35\\
\midrule
\texttt{Email (EM)} & 986 & 2 & 8,058 & 28,032 & 281\\
\texttt{HepTh (HT)} & 9,875 & 2 & 13,040 & 32,986 & 330\\
\texttt{WikiVote (WV)} & 7,115 & 2 & 50,564 & 70,381 & 704\\
\texttt{Bitcoin (BC)} & 5,881 & 2 & 2,812 & 44,434 & 445\\
\texttt{AS-CAIDA (AS)} & 31,379 & 2 & 53,613 & 587,677 & 5,877\\
\texttt{SimpleWiki (SW)} & 100,312 & 2 & 698,216 & 1,439,148 & 14,392\\
\bottomrule
\end{tabular}
\end{table}

%% file: sections/unified_results.tex
\begin{table}[t]
\centering\small
\setlength{\tabcolsep}{2pt}
\renewcommand{\arraystretch}{1.10}
\caption{Average running time (s). TL: time limit.}
\label{tab:unified-time}
\begin{tabular*}{\columnwidth}{@{\extracolsep{\fill}}cccccccc@{}}
\toprule
\multirow{2}{*}{Data} & \multirow{2}{*}{$\epsilon$} & \multicolumn{3}{c}{Dynamic methods} & \multicolumn{3}{c}{Static methods}\\
\cmidrule(lr){3-5}\cmidrule(l){6-8}
 & & \texttt{CAP} & \texttt{Udshp} & \texttt{CQ} & \texttt{Exact} & \texttt{DI} & \texttt{ExactFlow}\\
\midrule
\multirow{5}{*}{\texttt{DW}} & 1 & \textbf{0.24} & 13.52 & 1448.21 & \multirow{5}{*}{34.72} & \multirow{5}{*}{4.69} & \multirow{5}{*}{8.93}\\
 & 0.7 & \textbf{0.32} & 46.19 & TL &  &  & \\
 & 0.5 & \textbf{0.35} & 168.60 & TL &  &  & \\
 & 0.2 & \textbf{0.55} & TL & TL &  &  & \\
 & 0.1 & \textbf{0.84} & TL & TL &  &  & \\
\midrule
\multirow{5}{*}{\texttt{MA}} & 1 & \textbf{0.28} & 14.22 & 824.18 & \multirow{5}{*}{54.19} & \multirow{5}{*}{5.93} & \multirow{5}{*}{10.34}\\
 & 0.7 & \textbf{0.33} & 47.27 & TL &  &  & \\
 & 0.5 & \textbf{0.37} & 159.40 & TL &  &  & \\
 & 0.2 & \textbf{0.69} & TL & TL &  &  & \\
 & 0.1 & \textbf{1.59} & TL & TL &  &  & \\
\midrule
\multirow{5}{*}{\texttt{AU}} & 1 & \textbf{0.21} & 4.03 & 500.08 & \multirow{5}{*}{20.35} & \multirow{5}{*}{4.26} & \multirow{5}{*}{6.76}\\
 & 0.7 & \textbf{0.25} & 11.76 & 1112.02 &  &  & \\
 & 0.5 & \textbf{0.29} & 38.49 & TL &  &  & \\
 & 0.2 & \textbf{0.58} & 1079.61 & TL &  &  & \\
 & 0.1 & \textbf{1.25} & TL & TL &  &  & \\
\midrule
\multirow{5}{*}{\texttt{GE}} & 1 & \textbf{2.47} & 38.34 & TL & \multirow{5}{*}{TL} & \multirow{5}{*}{959.86} & \multirow{5}{*}{912.67}\\
 & 0.7 & \textbf{2.43} & 57.40 & TL &  &  & \\
 & 0.5 & \textbf{2.48} & 103.25 & TL &  &  & \\
 & 0.2 & \textbf{3.10} & 935.33 & TL &  &  & \\
 & 0.1 & \textbf{3.70} & TL & TL &  &  & \\
\midrule
\multirow{5}{*}{\texttt{SO}} & 1 & \textbf{11.94} & 1680.65 & TL & \multirow{5}{*}{TL} & \multirow{5}{*}{320.04} & \multirow{5}{*}{484.23}\\
 & 0.7 & \textbf{13.22} & TL & TL &  &  & \\
 & 0.5 & \textbf{12.52} & TL & TL &  &  & \\
 & 0.2 & \textbf{18.90} & TL & TL &  &  & \\
 & 0.1 & \textbf{36.44} & TL & TL &  &  & \\
\bottomrule
\end{tabular*}
\end{table}

%% file: sections/validation_records.tex
We check that the speedups preserve \texttt{CAP}'s required approximation.
An independent integer-allocation observer tracks every update and checks
edge mass, vertex loads, and the maintained solution density against
\[
 L\le\rho^*\le U\le(1+\epsilon)L,
 \qquad U=\max_v q_v/b.
\]
Its ledger is fully rescanned every 1,024 updates and at queries.
All original-calendar validation trajectories pass.
\Cref{tab:cap-quality} further compares returned densities with exact
maximum-closure optima at every calendar query; all tested
dataset/accuracy combinations meet $1/(1+\epsilon)$.
The additional queries in \Cref{sec:query-frequency} also meet their
requested target. Thus, the runtime advantage is accompanied by
validated solution quality, including between scheduled queries.
\input{sections/five_quality}

We independently check every returned set from the complete
\texttt{CQ} runs against exact snapshot optima; all meet the requested
density ratio $1/(1+\epsilon)$.

%% file: sections/five_quality.tex
\begin{table}[t]
\centering\small
\setlength{\tabcolsep}{3pt}
\renewcommand{\arraystretch}{1.12}
\caption{Minimum \texttt{CAP} density divided by the exact optimum across complete validation trajectories. Column headings give $\epsilon$.}
\label{tab:cap-quality}
\begin{tabular*}{\columnwidth}{@{\extracolsep{\fill}}ccccccc@{}}
\toprule
Data & 1 & 0.7 & 0.5 & 0.2 & 0.1 & 0.01\\
\midrule
\texttt{DW} & 0.5846 & 0.6099 & 0.7150 & 0.8565 & 0.9324 & 0.9964\\
\texttt{MA} & 0.5299 & 0.6223 & 0.7215 & 0.8776 & 0.9528 & 0.9961\\
\texttt{AU} & 0.5475 & 0.6545 & 0.7012 & 0.9087 & 0.9483 & 0.9926\\
\texttt{GE} & 0.5294 & 0.6120 & 0.6667 & 0.8333 & 0.9184 & 0.9913\\
\texttt{SO} & 0.5627 & 0.6119 & 0.6987 & 0.8664 & 0.9394 & 0.9977\\
\bottomrule
\end{tabular*}
\end{table}

%% file: sections/query_frequency.tex
\Needspace{6\baselineskip}
\subsection{Scalability with Query Frequency}
\label{sec:query-frequency}
On \texttt{DW} and \texttt{GE}, we fix $\epsilon=0.5$ and use the
query schedules in \Cref{sec:experiment-setting}.
\Cref{tab:query-frequency} shows the results.
\input{sections/query_frequency_table}
\input{sections/every_update_size_table}

Increasing query frequency sixteenfold leaves \texttt{CAP}'s running
time almost unchanged on both datasets. The maintained solution makes
each query an output operation, avoiding a new optimization or
load-threshold search. With 512 queries on \texttt{DW}, \texttt{CAP}
is $113.9$--$671.1\times$ faster than the three static methods and
$551.9\times$ faster than \texttt{Udshp} under its native parameter.
On \texttt{GE}, only \texttt{CAP} completes at the highest frequency;
the \texttt{Udshp}, \texttt{CQ}, and static-solver entries are all TL.
These results show that explicitly maintaining the solution becomes
particularly useful when queries are frequent.

\paragraph{Reporting after every update.}
We further evaluate \texttt{CAP} on all five hypergraph datasets with
$\epsilon\in\{0.1,0.01\}$.
We keep every insertion and deletion unchanged
and issue exactly one query immediately after each update. Each query
returns only the current solution size $|C|$, using the maintained
cardinality without enumerating its vertices. Thus, the number of queries
equals the number of updates. \Cref{tab:every-update-size} reports the
total running time for maintenance and these size-only responses.

At $\epsilon=0.1$, even the largest
stream, \texttt{SO}, takes only 37.27 seconds for 15.9 million
updates and responses. Stricter accuracy can make maintenance more
expensive: at $\epsilon=0.01$, the same stream takes 1586.63 seconds,
whereas \texttt{GE} still finishes in 6.48 seconds.
The constant-time size query removes enumeration cost; it does not
remove the dataset-dependent cost of maintaining a more accurate solution.

%% file: sections/query_frequency_table.tex
\begin{table}[t]
\centering\small
\setlength{\tabcolsep}{2pt}
\renewcommand{\arraystretch}{1.15}
\caption{Average running time (s) with varying query counts. TL: time limit.}
\label{tab:query-frequency}
\begin{tabular*}{\columnwidth}{@{\extracolsep{\fill}}cccccccc@{}}
\toprule
Data & Queries & \texttt{CAP} & \texttt{Udshp} & \texttt{CQ} & \texttt{Exact} & \texttt{DI} & \texttt{ExactFlow}\\
\midrule
\multirow{3}{*}{\texttt{DW}} & 32 & \textbf{0.35} & 174.39 & TL & 33.37 & 4.62 & 8.86\\
 & 128 & \textbf{0.34} & 172.94 & TL & 72.92 & 11.42 & 22.53\\
 & 512 & \textbf{0.35} & 190.42 & TL & 231.53 & 39.30 & 74.20\\
\midrule
\multirow{3}{*}{\texttt{GE}} & 210 & \textbf{2.58} & 103.25 & TL & TL & 959.86 & 912.67\\
 & 840 & \textbf{2.53} & 1107.58 & TL & TL & TL & TL\\
 & 3,360 & \textbf{2.51} & TL & TL & TL & TL & TL\\
\bottomrule
\end{tabular*}
\end{table}

%% file: sections/every_update_size_table.tex
\begin{table}[!t]
\centering\small
\renewcommand{\arraystretch}{1.15}
\caption{Average running time with a size-only query after every update.}
\label{tab:every-update-size}
\begin{tabular*}{\columnwidth}{@{\extracolsep{\fill}}cccc@{}}
\toprule
\multirow{2}{*}{Data} & \multirow{2}{*}{Updates / queries} & \multicolumn{2}{c}{\texttt{CAP} time (s)}\\
\cmidrule(lr){3-4}
 & & $\epsilon=0.1$ & $\epsilon=0.01$\\
\midrule
\texttt{DW} & 529,412 & 0.86 & 4.97\\
\texttt{MA} & 611,274 & 1.60 & 22.85\\
\texttt{AU} & 382,007 & 1.26 & 14.15\\
\texttt{GE} & 1,471,530 & 3.80 & 6.48\\
\texttt{SO} & 15,897,567 & 37.27 & 1586.63\\
\bottomrule
\end{tabular*}
\end{table}

%% file: sections/size_scalability.tex
\Needspace{6\baselineskip}
\subsection{Scalability with Data Size}
\label{sec:data-size}
We evaluate \texttt{CAP} on \texttt{GE} and \texttt{SO}, the two datasets
with the largest peak numbers of active hyperedges, at $\epsilon=0.1$.
For each dataset, we uniformly shuffle the vertex set with a fixed
random seed and retain its first $\lfloor k|V|/100\rfloor$ vertices
for $k\in\{20,40,60,80,100\}$. Each nested sample induces a stream
containing exactly the original hyperedges whose endpoints all belong
to the sample. We preserve their insertion/deletion order, the full
calendar query schedule (including empty snapshots), and the original
rank upper bound. Queries return full vertex sets. The 100\% points
reuse the complete measurements from \Cref{sec:main-results}.

\begin{figure}[t]
\centering
\includegraphics[width=0.78\columnwidth]{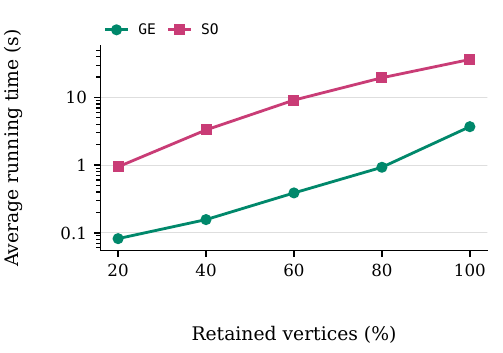}
\caption{Average running time of \texttt{CAP} with varying data size
($\epsilon=0.1$). The vertical axis is logarithmic.}
\Description{CAP running times on GE and SO at 20, 40, 60, 80, and 100 percent
vertex retention. Both curves increase across these nested samples.}
\label{fig:data-size}
\end{figure}

\Cref{fig:data-size} shows increasing running time as the samples grow,
with \texttt{CAP} completing even the full \texttt{SO} stream in
36.44 seconds. From 20\% to 100\% vertex retention on \texttt{SO},
the update count grows $96.2\times$ but running time grows only
$38.7\times$.
This behavior is consistent with maintenance that often processes
updates locally, rather than rescanning the entire active hypergraph.
Vertex-induced sampling changes both the edge set and its rank
distribution, so the horizontal axis is a sampling fraction rather
than a direct measure of input volume.

%% file: sections/geology_precision.tex
\begin{figure}[t]
\centering
\includegraphics[width=0.78\columnwidth]{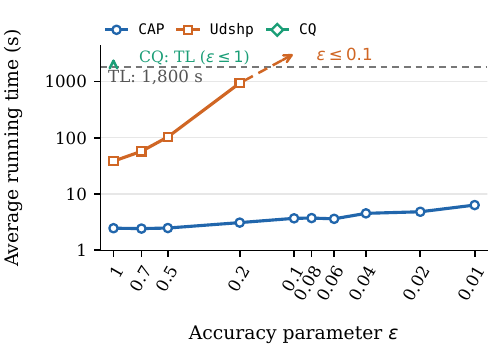}
\caption{Average running time with varying precision on \texttt{GE}.
Arrows mark the 1,800-second whole-process limit.
}
\Description{CAP and Udshp running times with the CQ reconstruction added to the comparison. }
\label{fig:geology-precision}
\end{figure}

%% file: sections/ablation.tex
\Needspace{6\baselineskip}
\subsection{Ablation Study}
\label{sec:ablation}
We study witness-guided repair
on \texttt{DW}, \texttt{MA}, \texttt{AU}, \texttt{GE}, and \texttt{SO} with $\epsilon=0.1$.
We compare \texttt{CAP}'s outer capacity and reserve destinations
(\Cref{sec:cap-method}) with a tighter variant:
\begin{itemize}[leftmargin=*]
  \item \texttt{CAP}: Outer capacity $\lfloor(1+\epsilon)bL\rfloor$;
  destinations have load below $\lfloor(1+\epsilon/2)bL\rfloor$.
  \item \texttt{CAP-Tight}: Capacity $\lceil bL\rceil$;
  any destination below that capacity is accepted.
\end{itemize}
All completed runs meet the required approximation at every query.
\input{sections/ablation_figure}
\Cref{fig:witness-ablation} shows the combined benefit of the larger
outer capacity and reserve destinations. \texttt{CAP} is
$2.0$--$57.9\times$ faster on the four datasets where both variants
complete, with $3.6$--$158.0\times$ fewer positive-incidence scans.
On \texttt{SO}, \texttt{CAP-Tight} exceeds 300 seconds, while
\texttt{CAP} completes. The lower scan counts are consistent with
reduced repair work from the combined use of the larger outer capacity
and reserve destinations.

%% file: sections/ablation_figure.tex
\begin{figure}[t]
\centering
\includegraphics[width=0.88\columnwidth]{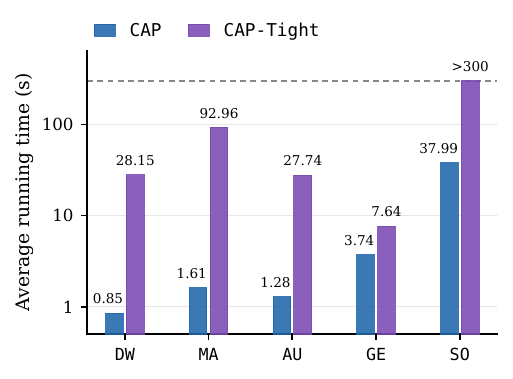}
\caption{Average running time in the ablation study ($\epsilon=0.1$).}
\Description{CAP and CAP-Tight running times on \texttt{DW}, \texttt{MA}, \texttt{AU}, \texttt{GE}, and \texttt{SO}.
The vertical axis uses a log scale.}
\label{fig:witness-ablation}
\end{figure}

%% file: sections/ordinary_graphs.tex
\begin{table}[t]
\centering\small
\setlength{\tabcolsep}{2pt}
\renewcommand{\arraystretch}{1.10}
\caption{Average total running time (s) on ordinary graphs. Speedup is baseline time divided by \texttt{CAP} time.}
\label{tab:ordinary-graphs}
\begin{tabular*}{\columnwidth}{@{\extracolsep{\fill}}lcccccc@{}}
\toprule
Algorithm & \texttt{EM} & \texttt{HT} & \texttt{WV} & \texttt{BC} & \texttt{AS} & \texttt{SW}\\
\midrule
\multicolumn{7}{c}{Total time (s)}\\
\texttt{CAP} & \textbf{0.03} & \textbf{0.02} & \textbf{0.19} & \textbf{0.11} & 1.61 & 5.24\\
\texttt{ImprovedDynOpt} & 0.08 & 0.68 & 0.78 & 0.30 & \textbf{1.52} & \textbf{4.53}\\
\texttt{INS/DEL} & 0.73 & 0.33 & 11.08 & 0.15 & 6.04 & 334.62\\
\midrule
\multicolumn{7}{c}{\texttt{CAP} speedup over each baseline}\\
\texttt{ImprovedDynOpt} & $2.2\times$ & $29.3\times$ & $4.1\times$ & $2.7\times$ & $0.9\times$ & $0.9\times$\\
\texttt{INS/DEL} & $21.1\times$ & $14.2\times$ & $57.9\times$ & $1.4\times$ & $3.7\times$ & $63.8\times$\\
\bottomrule
\end{tabular*}
\end{table}

%% file: sections/ordinary_discussion.tex
\Cref{tab:ordinary-graphs} shows that \texttt{CAP}'s benefit extends to
ordinary graphs: it is $1.4$--$63.8\times$ faster than \texttt{INS/DEL}
on all six streams and $2.2$--$29.3\times$ faster than
\texttt{ImprovedDynOpt} on four. The advantage is not universal:
\texttt{ImprovedDynOpt} is modestly faster on \texttt{AS} and
\texttt{SW}. Thus, \texttt{CAP} remains competitive with specialized
graph implementations while supporting general hyperedges.

%% file: appendix.tex
\appendix
\crefalias{section}{appendix}
\crefalias{subsection}{appendix}
\numberwithin{equation}{section}
\input{sections/analysis_appendix}
\input{sections/cq_reconstruction}

%% file: sections/analysis_appendix.tex
\section{Transport Bounds and Supporting Derivations}
\label{sec:transport-details}
We collect the proofs of the search-work and transport bounds,
using the notation of \Cref{sec:local-search-work,sec:reserve-transport}.
We also give the supporting update-charge inequalities and
the degree-tail bound.

\subsection{Proof of the Local Search-Work Bound}
\label{sec:local-search-proof}

\begin{proof}[Proof of \Cref{lem:local-search-work}]
Consider one search and set $H=\{v:b\deg(v)\ge t\}$, so $|H|\le h$.
Let $P$ count inspected positive incidences, $E$ distinct expanded
hyperedges, and $J$ endpoint tests. All expanded vertices have load at
least $t$ and belong to $H$. Each positive incidence carries at least
one integer unit, so their total number is bounded by the sum of
loads over $H$. Also, every expanded edge consumes an inspected
incidence. Therefore $P\le hQ$ and $E\le P$.

A fully expanded edge contains only vertices of load at least $t$:
otherwise the search would have stopped at an eligible destination.
Such an edge lies in $H$ and has at most $\min\{r,h\}$ endpoints.
The last edge of a successful search may leave $H$, but scanning it
stops at the first lower-load endpoint. If $\sigma$ indicates success,
these observations give
\begin{align*}
 J&\le\min\{r,h\}E+\sigma,\\
 P+J&\le(1+\min\{r,h\})hQ+\sigma.
\end{align*}
Let $K$ be the number of successful searches. Summing over all $K+F$
searches yields
\[
 S_{\rm scan}\le(1+\min\{r,h\})hQ(K+F)+K.
\]
Each successful transfer moves a positive integer amount, so $K\le A$,
which proves \Cref{eq:hyper-aggregate-search}.
\end{proof}

\subsection{Proof of the Transport Bound}
\label{sec:transport-proof}
\subsubsection{Potential Decrease under a Transfer}
\label{sec:potential-exposure}

Set $\delta=\epsilon/\alpha$, so that $c=\lfloor s\rfloor$,
$t=\lfloor\beta s\rfloor$, and $1-\beta=\delta/2$.
We use the potential
\[
 \Phi_s(q)=\frac1s\sum_v(q_v-\beta s+1)_+^2,
 \qquad (x)_+=\max\{x,0\},
\]
and assign potential zero to an empty state.
It charges only loads above $\beta s-1$, quadratically in their excess.
An eligible destination has $q_w<t$ and contributes zero before
the transfer. The shift by one handles integer rounding.

\begin{lemma}[Potential decrease of a reserve transfer]
\label[lemma]{lem:reserve-potential-drop}
At a fixed scale $s$, a successful reserve path of amount $a$ satisfies
\begin{equation}
 \Phi_s(q^{\rm before})-\Phi_s(q^{\rm after})\ge\delta a.
\label{eq:reserve-path-potential}
\end{equation}
\end{lemma}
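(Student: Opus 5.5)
Only the root $z$ and the destination $w$ change load, by $-a$ and $+a$ (\Cref{lem:transfer-progress}); intermediate loads cancel. So the potential change reduces to two one-dimensional terms. Write $\tau=\beta s-1$ and $g(x)=(x-\tau)_+^2$, so that
\[
 s\bigl(\Phi_s(q^{\rm before})-\Phi_s(q^{\rm after})\bigr)
 =\bigl[g(q_z)-g(q_z-a)\bigr]-\bigl[g(q_w+a)-g(q_w)\bigr].
\]
The goal is to show this is at least $s\delta a$.

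\emph{Root term.} By \Cref{eq:cap-transfer}, $a\le q_z-c$, so $q_z-a\ge c=\lfloor s\rfloor>s-1$. Also $c\ge t$ and $t=\lfloor\beta s\rfloor>\beta s-1=\tau$, so both $q_z$ and $q_z-a$ lie above $\tau$. The function $g$ is then a genuine quadratic on this range, and
\[
 g(q_z)-g(q_z-a)=a\bigl(q_z+q_z-a-2\tau\bigr)\ge a\bigl(2c+a-2\tau\bigr).
\]

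\emph{Destination term.} The destination satisfies $q_w<t$, hence as integers $q_w\le t-1\le\beta s-1=\tau$. Therefore $g(q_w)=0$. By \Cref{eq:cap-transfer}, $q_w+a\le c$, so
\[
 g(q_w+a)=(q_w+a-\tau)_+^2\le a\cdot(c-\tau).
\]
This holds because $0\le q_w+a-\tau\le a$ and $q_w+a-\tau\le c-\tau$, so the square is at most the product of the two bounds.

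\emph{Combining.} The difference is at least
\[
 a\bigl(2c+a-2\tau-c+\tau\bigr)=a\bigl(c-\tau+a\bigr)\ge a\bigl(c-\beta s+2\bigr),
\]
using $a\ge1$. Since $c\ge s-1$, we get $c-\beta s+2\ge(1-\beta)s+1>(\delta/2)s$. This yields only $\delta a/2$ after dividing by $s$, so the constant needs tightening.

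The refinement is to use the exact bound on $g(q_w+a)$ rather than the cruder product bound. Set $u=(q_w+a-\tau)_+\le\min\{a,c-\tau\}$. The difference is then
\[
 a\bigl(2(c-\tau)+a\bigr)-u^2\ge a\bigl(2(c-\tau)+a\bigr)-a(c-\tau)
\]
if $u\le a$. For a sharper estimate, note that $u^2\le u\cdot a$ and split into cases. If $u\le c-\tau$, then $u^2\le(c-\tau)^2$, and it remains to compare this against $2a(c-\tau)$. A cleaner route is
\[
 a\bigl(2(c-\tau)+a\bigr)-u^2\ge a\bigl(2(c-\tau)+a\bigr)-a\min\{a,c-\tau\}\ge a\bigl(2(c-\tau)+a-a\bigr)=2a(c-\tau).
\]
Now $c-\tau\ge s-1-\beta s+1=(\delta/2)s$, so the difference is at least $a\delta s$. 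Dividing by $s$ gives \Cref{eq:reserve-path-potential}.

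The main obstacle is exactly this rounding bookkeeping. One must verify that $c-\tau\ge\delta s/2$ holds with the floors: here $c\ge s-1$ because $c=\lfloor s\rfloor$, and $\tau=\beta s-1$ exactly, so the shift by one absorbs the floor of $c$. One must also verify that $q_w\le\tau$ follows from integrality of $q_w$ and $t$.
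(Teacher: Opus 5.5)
Your argument is correct and is essentially the paper's proof. Both bound the root's unscaled decrease below by $2a(c-\tau)+a^2$ using $q_z-a\ge c$, bound the destination's increase above by $a^2$ using $q_w\le t-1\le\tau$, and conclude with $c-\tau\ge(1-\beta)s=\delta s/2$. You could skip the detour through the weaker product bound $a(c-\tau)$: the estimate you finally use, $u^2\le a\min\{a,c-\tau\}\le a^2$, is exactly the paper's step.
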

\begin{proof}[Proof of \Cref{lem:reserve-potential-drop}]
Write $a_s=\beta s-1$. Only the root $z$ and destination $w$ change their loads. An eligible
integer destination satisfies $q_w\le t-1\le a_s$, so its initial
contribution is zero and its final unscaled contribution is at most $a^2$.
By \Cref{eq:cap-transfer}, the root ends at $q_z-a\ge c$.
Since $c-a_s\ge(s-1)-(\beta s-1)=(1-\beta)s>0$, its decrease before
division by $s$ is
\begin{align*}
 (q_z-a_s)^2-(q_z-a-a_s)^2
 &=2a(q_z-a-a_s)+a^2\\
 &\ge2a(c-a_s)+a^2.
\end{align*}
Subtracting the destination's increase and dividing by $s$ leaves at
least $2a(1-\beta)=\delta a$.
\end{proof}

Each unit transported thus releases at least $\delta$ units of
potential. This converts a sequence of repairs into an accounting of
the potential introduced by updates. \rev{Convex load potentials have
precedent~\cite{christiansen2025local}}; here the scale follows
the maintained solution and can decrease after a deletion.

\subsubsection{Properties of the Shifted Potential}
Write $a_s=\beta s-1$. From $bL\ge2/\epsilon$, we obtain
\[
 s\ge\frac{2(1+\epsilon)}{\epsilon}=\frac2\delta,
 \qquad a_s=(1+\epsilon/2)bL-1\ge\frac2\epsilon.
\]
In particular, $a_s\ge0$ and $1/s\le\delta/2$.
For $q>a_s$, the derivative of $(q-a_s)^2/q$ is
$1-a_s^2/q^2\ge0$. For $q\le a_s$, its positive-part version is zero.
Consequently, for every stable load $0<q\le c\le s$,
\[
 \frac{(q-a_s)_+^2}{sq}
 \le\frac{(s-a_s)^2}{s^2}
 =(1-\beta+1/s)^2.
\]
Summing over vertices and using $\sum_vq_v=bm$ gives
\begin{equation}
 \Phi_s(q)\le(1-\beta+1/s)^2bm\le\delta^2bm.
 \label{eq:reserve-stable-potential}
\end{equation}
The last inequality uses $1-\beta=\delta/2$.
We also need monotonicity in the witness scale. For a fixed load $q$,
an active summand satisfies
\begin{equation}
 \frac{\partial}{\partial s}\frac{(q+1-\beta s)^2}{s}
 =-\frac{(q+1)^2}{s^2}+\beta^2<0.
 \label{eq:potential-scale-derivative}
\end{equation}
An inactive summand is zero, and the two pieces meet continuously with
zero derivative. Thus increasing $s$ with the allocation fixed cannot
increase potential, including when a coordinate becomes inactive.

\subsubsection{Single-Update Charges}\label{sec:single-update-charges}
\begin{lemma}[Insertion charge]
\label[lemma]{lem:insertion-potential}
A complete insertion transports at most $b$ units and increases the stable
potential by at most $2\delta b$.
\end{lemma}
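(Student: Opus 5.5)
We split the insertion into its transport claim and its potential claim. Throughout the update the witness scale $s$ cannot decrease: an insertion only adds an edge, so the retained witness keeps or raises its density. Normalization, refresh and failed searches also never lower $L$ (\Cref{sec:cap-refresh}, \Cref{lem:closed-witness}). Let $q$ be the stable loads before the insertion and $s$ the old scale.

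\textbf{Transport.} Water filling adds exactly $b$ units, spread over the endpoints of $e$. Before the insertion every load was at most $\lfloor s\rfloor$. Immediately afterwards, the excess $X_c$ with respect to the (possibly larger) new capacity $c'\ge c$ is therefore at most the added mass, so $X_{c'}\le b$. Later changes to the capacity only raise it, and then $X_{c'}$ can only drop. By \Cref{lem:transfer-progress}, each successful transfer of $a$ units lowers $X_{c'}$ by exactly $a$, and failed searches or refreshes never increase it. Hence the total amount transported during the update is at most $b$.

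\textbf{Potential.} We track $\Phi$ along the update in three phases.
\begin{enumerate}[leftmargin=*]
\item \emph{Adding mass at the old scale $s$.} Each endpoint $v$ receives $x_v$ units, with $\sum_v x_v=b$. Its summand changes by
\[
\frac{(q_v+x_v-a_s)_+^2-(q_v-a_s)_+^2}{s}\le\frac{2x_v(q_v-a_s)_++x_v^2}{s}.
\]
Since $q_v\le c\le s$, we have $(q_v-a_s)_+\le s-a_s=(1-\beta)s+1$, so the linear part contributes at most $2b(1-\beta+1/s)\le 2b(\delta/2+\delta/2)=2\delta b$, using $1/s\le\delta/2$.
\item \emph{Raising the scale.} Any subsequent increase of $s$ (from normalization, refresh or a failed search) does not increase $\Phi$, by the monotonicity in \Cref{eq:potential-scale-derivative}.
\item \emph{Transfers.} Each transfer at the current scale strictly decreases $\Phi$, by \Cref{lem:reserve-potential-drop}.
\end{enumerate}
So the net increase from the stable state before the update to the stable state after it is bounded by the phase-1 increase alone.

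\textbf{Main obstacle.} The difficulty is the quadratic term $\sum_v x_v^2/s$ in phase 1. A crude bound gives only $\sum_v x_v^2\le b^2$, and $b^2/s$ need not be $O(\delta b)$. To handle this, we would use that water filling spreads the mass. If $x_v$ is large, then $q_v+x_v$ sits at the common filling level. That level is at most the average load over $e$ plus $b/|e|$, which should let us merge the squared term into the linear estimate. The precise step is to bound the summand's new value $(q_v+x_v-a_s)_+$ by $\lfloor s\rfloor+b-a_s$. The aim is to show that the $x_v^2$ term is $O(\delta b)$, either by exploiting $b\le s\delta$-type relations (we have $s=\alpha bL\ge b(1+\epsilon)/r$) or, in the worst case, by absorbing the term into the constant.

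If the clean constant $2\delta b$ fails, we will instead try to offset the $x_v^2/s$ term against the potential released by the subsequent transfers. The transferred mass is $\le b$, and the root's decrease in \Cref{lem:reserve-potential-drop} included an extra $a^2/s$ that the proof there discarded. That slack is exactly the same size as the quadratic overshoot, so recovering it would restore the bound $2\delta b$.
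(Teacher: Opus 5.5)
\textbf{There is a genuine gap in the potential half, exactly at the step you flag, and neither proposed repair closes it.} Your transport argument is correct and is the paper's.

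\emph{The quadratic term is real.} It is not slack in an estimate. With the quadratic potential at a fixed scale, an insertion that creates overload really does raise $\Phi$ by far more than $2\delta b$. Take $r=2$ and build a matching from empty: loads are about $b/2$, $L=1/2$, $s=\alpha b/2$. Then insert an edge joining two matched vertices. Water filling lifts both to about $b$, and $\Phi_s$ rises by roughly $b(1-\epsilon/2)^2/\alpha$. That is a factor of order $1/\epsilon$ above $2\delta b=2\epsilon b/\alpha$.

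\emph{Why route (a) fails.} Water filling helps only when every new load stays at or below $s$, and there your slope bound already suffices. The relation $b\le\delta s$ would require $L\ge1/\epsilon$, which is not guaranteed. The term cannot be absorbed into the constant either: after the division by $\delta$ in the telescoping, it would inflate the $O(bN_+)$ insertion term of the transport bound by a factor of order $1/\epsilon$.

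\emph{Why route (b) fails as stated.} It misreads \Cref{lem:reserve-potential-drop}. There the root's $+a^2$ is not discarded; it is exactly what cancels the destination's increase, which is bounded by $a^2$. Real slack does exist elsewhere: the dropped term $2a(q_z-a-c)$, and the sharper destination bound $\min\{a,c-a_s\}^2$. But harvesting it while failed searches change the scale mid-repair is a substantial argument you have not written. Since your phases 2 and 3 are used only as ``does not increase,'' your chain cannot reach $2\delta b$.

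\textbf{The missing idea is never to evaluate the quadratic on an overloaded coordinate.}
\begin{itemize}
\item Fix the \emph{final} scale $s_f$ and set $h(q)=(q-\beta s_f+1)_+^2/s_f$. Above $s_f$, replace $h$ by its tangent line at $s_f$, and call the result $g$.
\item $g$ is convex, nondecreasing, and globally Lipschitz with constant $2(1-\beta+1/s_f)\le2\delta$. So adding $b$ units raises $\sum_v g(q_v)$ by at most $2\delta b$, however far above capacity the loads go.
\item Every transfer satisfies $q_z-a\ge c\ge q_w+a$. Convexity alone then gives $g(q_z-a)+g(q_w+a)\le g(q_z)+g(q_w)$; the reserve lemma is not needed.
\item Witness changes leave $\sum_v g(q_v)$ unchanged, because $s_f$ is fixed.
\item The initial and final stable loads are at most $s_f$, where $g=h$. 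Hence $\Phi_{s_f}(q^f)\le\Phi_{s_f}(q^i)+2\delta b\le\Phi_{s_i}(q^i)+2\delta b$, the last step by the scale monotonicity of \Cref{eq:potential-scale-derivative}.
\end{itemize}
An insertion into an empty graph has no old scale, which your proposal does not address. It is handled by $q^i=0$ and $g(0)=0$.
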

This bounds both effects of an insertion: the repair performed now and
the potential it can leave for later repairs. Since an inserted edge
adds exactly $b$ units, the first bound is independent of the number of
other edges in the graph.
\begin{proof}[Proof of \Cref{lem:insertion-potential}]
Let $s_i$ and $s_f$ be the initial and final scales. An insertion cannot
decrease witness density, so $s_f\ge s_i$ when the initial state is
nonempty. Fix $s_f$ throughout this argument and define
\[
 h(q)=\frac{(q-\beta s_f+1)_+^2}{s_f}.
\]
Extend it above $s_f$ by its tangent:
\[
 g(q)=\begin{cases}
 h(q),&q\le s_f,\\
 h(s_f)+h'(s_f)(q-s_f),&q>s_f.
 \end{cases}
\]
The tangent extension $g$ is convex and nondecreasing. Its derivative
increases up to $h'(s_f)$ and is constant thereafter, so it is globally
Lipschitz with constant
\[
 \kappa=2(1-\beta+1/s_f)\le2\delta.
\]
The new edge adds nonnegative amounts summing to $b$. Their total
increase in $\sum_vg(q_v)$ is therefore at most $\kappa b$.
Every subsequent transfer obeys
$q_z-a\ge c\ge q_w+a$. For a convex function, transferring $a$ from
the larger coordinate towards the smaller under this condition gives
$g(q_z-a)+g(q_w+a)\le g(q_z)+g(q_w)$.
Thus transfers can only decrease the sum, and witness
changes leave it unchanged because the allocation is fixed.

Initial and final loads are at most $s_f$, where $g=h$.
Writing these vectors as $q^i,q^f$, respectively, yields
\[
 \Phi_{s_f}(q^f)\le \Phi_{s_f}(q^i)+2\delta b
                \le \Phi_{s_i}(q^i)+2\delta b.
\]
The second inequality follows from \Cref{eq:potential-scale-derivative}.
For an initially empty graph, $q^i=0$ and $g(0)=0$, giving the same
bound from initial potential zero. This argument depends on the total
inserted mass $b$, so it also covers mixed edge sizes.

For transport, the old loads satisfy the old capacity and insertion
never lowers that capacity. Adding $b$ units can create at most $b$
units of excess. Each successful path reduces excess by its amount
(\Cref{lem:transfer-progress}); later capacity increases cannot create
excess. Total insertion transport is at most $b$.
\end{proof}

\begin{lemma}[Deletion charge]
\label[lemma]{lem:deletion-potential}
A deletion $d$ with reserve destinations satisfies
\begin{equation}
 \delta A_d\le bZ_d+\Phi_{\rm before}-\Phi_{\rm after},
 \label{eq:deletion-potential-charge}
\end{equation}
where $A_d$ is the transport during that deletion.
\end{lemma}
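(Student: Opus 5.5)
The plan is to follow the deletion through four consecutive phases and track the potential $\Phi$ across each one. The phases are the removal of the edge at the old scale $s_{\rm old}$, the scale change $s_{\rm old}\to s_0$ caused by normalization, and then the repair loop (transfers, failed-search witness improvements, and at most one refresh). Here $\Phi_{\rm before}=\Phi_{s_{\rm old}}(q^{\rm old})$ is the stable potential before the deletion, and $\Phi_{\rm after}$ is the stable potential at the final scale. The goal is to show that only the scale-drop phase can increase $\Phi$, and by at most $bZ_d$, while every transfer of amount $a$ decreases it by at least $\delta a$. Telescoping the phases then gives \Cref{eq:deletion-potential-charge}. If the graph becomes empty, the final potential is zero and $A_d=0$, so the inequality is trivial.

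\emph{Removal and repair phases.} Removing the shares of the deleted edge only lowers loads. Since each summand $(q_v-\beta s+1)_+^2/s$ is nondecreasing in $q_v$, this gives $\Phi_{s_{\rm old}}(\widetilde q)\le\Phi_{\rm before}$. During repair the graph is fixed, and there are two kinds of step. By \Cref{lem:reserve-potential-drop}, each successful transfer at the current scale decreases $\Phi$ by at least $\delta a$. A failed search (\Cref{lem:closed-witness}) or a refresh only raises $L$, and hence $s$, while the allocation stays fixed, so by \Cref{eq:potential-scale-derivative} these steps cannot increase $\Phi$. Summing over the repair loop gives $\delta A_d\le\Phi_{s_0}(\widetilde q)-\Phi_{\rm after}$.

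\emph{Scale change.} If $s_0\ge s_{\rm old}$, then $\Phi$ does not increase by the same monotonicity, and $Z_d=0$. Otherwise $s_0<s_{\rm old}$, and I must show
\[
\Phi_{s_0}(\widetilde q)-\Phi_{s_{\rm old}}(\widetilde q)\le bZ_d=\Bigl(1+\frac2{s_{\rm old}}\Bigr)\Bigl(\frac{s_{\rm old}}{s_0}-1\Bigr)\sum_{\widetilde q_v+1>\beta s_0}\widetilde q_v .
\]
I will prove this one vertex at a time. A vertex with $\widetilde q_v+1\le\beta s_0$ is inactive at every $s\ge s_0$, so it contributes zero. For an active vertex, write $q=\widetilde q_v$ and note $q\ge1$, because $\beta s_0\ge a_{s_0}+1>1$. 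Write the change as the integral $\int_{s_0}^{s_{\rm old}}(-\partial_s)$ of the clipped summand. On the subinterval where the summand is active the integrand equals $(q+1)^2/s^2-\beta^2$, and elsewhere it is zero. This gives a bound of the form
\[
(q+1)^2\Bigl(\frac1{s_0}-\frac1{s_{\rm old}}\Bigr)-\beta^2\,|\text{active part}| .
\]
The stability of the old state gives $q\le c_{\rm old}\le s_{\rm old}$. Hence $(q+1)^2\le q s_{\rm old}+2q+1$, which almost yields $(1+2/s_{\rm old})\,q\,(s_{\rm old}-s_0)/s_0$.

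\emph{Main obstacle.} The hard step is absorbing the leftover $+1$, and with it the effect of clipping. My first approach is to keep the $-\beta^2$ term over the active subinterval. I would use $s_0\ge2/\delta$ (so $1/s\le\delta/2$) together with $q+1>\beta s_0$ to show that this term dominates $(1/s_0-1/s_{\rm old})$ times the extra unit. If clipping makes the active subinterval too short for this, I would fall back on the simpler bound in which the integrand is at most $(q+1)^2/s^2$ only where $s<(q+1)/\beta$. I would then use that the endpoint $(q+1)/\beta$ gives a factor $\beta^{-1}\le 1+\delta/2$, which is absorbed by the slack $2q/s_{\rm old}$ because $q\ge\beta s_0-1$ is large. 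Summing the per-vertex bounds gives $bZ_d$. Combining the phases,
\[
\delta A_d\le\Phi_{s_0}(\widetilde q)-\Phi_{\rm after}\le bZ_d+\Phi_{s_{\rm old}}(\widetilde q)-\Phi_{\rm after}\le bZ_d+\Phi_{\rm before}-\Phi_{\rm after},
\]
which is \Cref{eq:deletion-potential-charge}.
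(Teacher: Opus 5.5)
Your phase decomposition matches the paper's proof: removal at the old scale, the scale drop $s_{\rm old}\to s_0$, then transfers, failed searches and the refresh. Those outer steps are handled correctly.

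The gap is the step you flag as the main obstacle, which you leave unresolved, and the difficulty is largely self-inflicted. Your bound $(q+1)^2(1/s_0-1/s_{\rm old})-\beta^2\cdot(\text{length of the active part})$ extends the positive term over the whole interval but keeps $-\beta^2$ only where the coordinate is active. The resulting requirement $\beta^2\cdot(\text{length})\ge 1/s_0-1/s_{\rm old}$ genuinely fails when $q+1$ is only slightly above $\beta s_0$, because the active part then has length near zero. Your fallback is only gestured at. The phrase ``absorbing $\beta^{-1}$ into the slack $2q/s_{\rm old}$ because $q\ge\beta s_0-1$ is large'' is not an argument: a lower bound on $q$ in terms of $s_0$ gives no control of $2q/s_{\rm old}$ when $s_{\rm old}/s_0$ is large.

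The missing idea, which is the paper's argument, is to absorb the extra unit pointwise, before integrating.
\begin{itemize}
\item Normalization gives $\beta u\ge\beta s_0\ge1$ for every $u\in[s_0,s_{\rm old}]$. You already have this via $1/u\le\delta/2<\beta$.
\item Hence on the active set $(q+1)^2/u^2-\beta^2\le\bigl((q+1)^2-1\bigr)/u^2=q(q+2)/u^2$. On the inactive set the integrand is $0\le q(q+2)/u^2$.
\item Integrating over all of $[s_0,s_{\rm old}]$ bounds each vertex's increase by $q(q+2)(1/s_0-1/s_{\rm old})$. No case analysis on where the coordinate deactivates is needed.
\item Using $q+2\le s_{\rm old}+2$ turns this into $(1+2/s_{\rm old})(s_{\rm old}/s_0-1)q$. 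Vertices outside $J_d$ contribute zero.
\end{itemize}

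Your two-case plan can also be closed directly.
\begin{itemize}
\item If the coordinate stays active on the whole interval, then $\beta^2s_0s_{\rm old}\ge1$ is exactly what your first comparison needs.
\item If it deactivates at $(q+1)/\beta\le s_{\rm old}$, the exact increase is $(q+1-\beta s_0)^2/s_0$. Since $0<q+1-\beta s_0\le\min\{q,\beta(s_{\rm old}-s_0)\}$, this is at most $\beta q(s_{\rm old}/s_0-1)$.
\end{itemize}
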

The repair is paid for by two sources: the new exposure charge $bZ_d$
and a decrease in stored potential. The bound accounts for both in
the same inequality, allowing the potential changes to cancel when
we sum over consecutive updates.
\begin{proof}[Proof of \Cref{lem:deletion-potential}]
Removing the edge decreases every affected load, so it cannot increase
potential at the old scale. If the graph becomes empty, there is no
transport and the final potential is zero, proving the inequality.
Otherwise the raw post-removal vector $\widetilde q$ satisfies
$\widetilde q_v\le s_{\rm old}$ for every vertex.

First consider $s_0<s_{\rm old}$. Both endpoints of this scale interval
are normalized, so $\beta u\ge1$ for $s_0\le u\le s_{\rm old}$.
On an active coordinate, \Cref{eq:potential-scale-derivative} gives
\[
 -\frac{\partial}{\partial u}
   \frac{(q+1-\beta u)^2}{u}
 =\frac{(q+1)^2}{u^2}-\beta^2
 \le\frac{q(q+2)}{u^2}.
\]
The last step uses $1/u^2\le\beta^2$. A coordinate can activate on
this interval only if it belongs to
$J_d=\{v:\widetilde q_v+1>\beta s_0\}$.
Integrating also accounts for coordinates that are initially inactive:
\begin{align*}
 \Phi_{s_0}(\widetilde q)-\Phi_{s_{\rm old}}(\widetilde q)
 &\le\left(\frac1{s_0}-\frac1{s_{\rm old}}\right)
       \sum_{v\in J_d}\widetilde q_v(\widetilde q_v+2)\\
 &\le\left(1+\frac2{s_{\rm old}}\right)
       \left(\frac{s_{\rm old}}{s_0}-1\right)
       \sum_{v\in J_d}\widetilde q_v\\
 &=bZ_d.
\end{align*}
The second step uses $\widetilde q_v+2\le s_{\rm old}+2$.
If $s_0\ge s_{\rm old}$ instead, scale monotonicity gives an increase
of at most zero, agreeing with $Z_d=0$.

After normalization, every witness replacement or peeling refresh
can only increase the scale. Each successful reserve transfer reduces
potential by at least $\delta$ times its amount
(\Cref{lem:reserve-potential-drop}). Adding these changes to the removal
and scale-change bounds proves \Cref{eq:deletion-potential-charge}.
\end{proof}

\subsubsection{Summing the Update Charges}
\label{sec:transport-telescoping}

Insertions add only bounded potential, while the exposure charge
$Z_d$ bounds the increase caused by lowering the scale after a
deletion. These charges pay for all transport in the sequence.

\begin{proof}[Proof of \Cref{lem:hyper-all-deletions}]
An insertion transports at most $b$ and increases stable potential
by at most $2\delta b$. A deletion transporting $A_d$ units satisfies
\[
 \delta A_d\le bZ_d+\Phi_{\rm before}-\Phi_{\rm after}.
\]
These single-update inequalities are proved in
\Cref{lem:insertion-potential,lem:deletion-potential}.
Let $A_+$ and $A_-$ be transport over insertions and deletions,
respectively, and $\Phi_f$ the final stable potential.
Starting from empty gives initial potential zero. Summing the deletion
inequalities cancels intermediate potentials except for the increases
at insertions, yielding
\[
 \delta A_-\le b\sum_d Z_d-\Phi_f+2\delta bN_+.
\]
Since $\Phi_f\ge0$ and $A_+\le bN_+$, we obtain
\[
 A\le3bN_++\frac b\delta\sum_d Z_d.
\]
Now use $b=O(r/\epsilon)$ and
$1/\delta=(1+\epsilon)/\epsilon=O(1/\epsilon)$.
\end{proof}

\subsection{Boundary States and Witness Destruction}
\label{sec:appendix-transport-telescoping}
We now sum the insertion and deletion charges. The intermediate
potentials cancel, leaving only the initial and final stable states.

\begin{lemma}[Transport with boundary potentials]
\label[lemma]{lem:transport-boundary}
Starting from an empty or normalized certified state with
$b\ge\lceil2r/\epsilon\rceil$ and $0<\epsilon\le1$, let $N_+$ be the
number of insertions and $\Phi_0,\Phi_f$ the initial and final stable
potentials. Set the potential of an empty state to zero.
Total transport satisfies
\[
 A\le3bN_++\frac{\Phi_0-\Phi_f+b\sum_d Z_d}{\delta}.
\]
In an $r$-uniform hypergraph, deleting the sole induced edge of a
normalized witness contributes zero exposure.
\end{lemma}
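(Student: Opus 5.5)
The plan is to repeat the telescoping argument from the proof of \Cref{lem:hyper-all-deletions}, this time keeping the boundary potentials instead of setting the initial one to zero. For each insertion, \Cref{lem:insertion-potential} gives transport at most $b$ and an increase in stable potential of at most $2\delta b$. That lemma's argument only uses that the pre-insertion state is stable or empty, so it applies equally when we start from a normalized certified state. For each deletion $d$, \Cref{lem:deletion-potential} gives $\delta A_d\le bZ_d+\Phi_{\rm before}-\Phi_{\rm after}$.

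Let $\Phi^{(j)}$ be the stable potential after the $j$th update, with $\Phi^{(0)}=\Phi_0$. For an insertion at step $j$ we write $\Phi^{(j)}-\Phi^{(j-1)}\le2\delta b$; for a deletion we write $\Phi^{(j-1)}-\Phi^{(j)}\ge\delta A_d-bZ_d$. Summing $\Phi^{(j-1)}-\Phi^{(j)}$ over all $j$ telescopes to $\Phi_0-\Phi_f$. This gives
\[
 \delta A_-\le\Phi_0-\Phi_f+b\sum_dZ_d+2\delta bN_+ .
\]
Dividing by $\delta$ and adding $A_+\le bN_+$ yields the stated bound $A\le3bN_++(\Phi_0-\Phi_f+b\sum_dZ_d)/\delta$. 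One point needs care: an intermediate empty state has potential zero by convention. Both lemmas already treat the empty endpoint (a deletion to empty gives $Z_d=0$ and zero final potential, and an insertion into empty starts from $g(0)=0$), so the telescoping passes through such states without change.

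For the second claim, take an $r$-uniform hypergraph and a normalized witness whose sole induced edge $e$ is deleted. Normalization produces $C=e$ with $|C|=r$, so $L=1/r$ before the deletion. After the deletion, if the graph is empty then $Z_d=0$ by definition. Otherwise normalization replaces $C$ by the endpoints of some surviving edge. That edge also has exactly $r$ endpoints and induces at least one edge, so $s_0\ge\alpha b/r=s_{\rm old}$. Again $Z_d=0$ by definition. The witness could in principle have been a larger set with $M=1$ and $L<1/r$, but normalization in \Cref{ln:update-normalize} forbids this, since $L\ge1/r$ together with $M=1$ forces $|C|\le r$. In an $r$-uniform graph, a set inducing an edge has at least $r$ vertices, so $|C|=r$ exactly.

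The main obstacle I expect is confirming that \Cref{lem:insertion-potential} and \Cref{lem:deletion-potential} really hold for arbitrary stable starting states and not only along sequences started from empty. Their proofs use only that the old loads are at most $s_{\rm old}$ and that scales are normalized, and both facts hold for any normalized certified state. Once that check is done, the rest is bookkeeping.
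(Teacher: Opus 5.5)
Your proposal is correct and follows the paper's proof essentially step for step. You telescope the insertion charge from \Cref{lem:insertion-potential} and the deletion charge \Cref{eq:deletion-potential-charge} while keeping $\Phi_0$ and $\Phi_f$, and you give the same argument as the paper that a one-edge normalized witness in an $r$-uniform hypergraph has exactly $r$ vertices, so any fallback edge yields $s_0\ge s_{\rm old}$ and hence $Z_d=0$. Your opening claim that normalization produced $C=e$ is not justified, since the witness may instead come from a failed search or a peeling refresh; however, your follow-up argument from $L\ge1/r$ and $M=1$ covers that case, exactly as the paper's proof does.
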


\begin{proof}[Proof of \Cref{lem:transport-boundary}]
Let $A_+$ and $A_-$ be transport summed over insertions and deletions.
By \Cref{lem:insertion-potential}, $A_+\le bN_+$, and the sum of
potential increases over insertions is at most $2\delta bN_+$.
The sum of all update-wise potential changes is $\Phi_f-\Phi_0$.
Summing \Cref{eq:deletion-potential-charge} therefore gives
\begin{align*}
 \delta A_-
 &\le b\sum_dZ_d-\sum_d(\Phi_{\rm after}-\Phi_{\rm before})\\
 &\le b\sum_dZ_d+\Phi_0-\Phi_f+2\delta bN_+.
\end{align*}
Adding $A_+\le bN_+$ and dividing the deletion bound by $\delta$
proves the stated $3bN_+$ bound. It retains both boundary potentials
and applies to every finite prefix of the update sequence.

In an $r$-uniform hypergraph, a one-edge witness has size $k\ge r$,
while normalization gives $1/k\ge1/r$. Hence $k=r$ and its old density
is exactly $1/r$. Any surviving edge supplies at least this density,
so replacing a destroyed witness cannot lower the scale. An empty
result has zero exposure by definition.
\end{proof}

Exposure is confined to a degree tail because
$\widetilde q_v\le b\deg_{\rm after}(v)$. For destruction in mixed
ranks, if the old witness has size $k$ and the fallback edge has size
$f$, then $s_{\rm old}/s_0\le f/k$; in particular, $f\le k$ gives
zero exposure. Thus the charge depends on the exposed degree tail and the drop in
witness density. Combining its transport bound with
\Cref{eq:hyper-aggregate-search} bounds the associated search work.
The new charge need not be smaller on every destructive deletion:
one may replace $Z_d$ there by
$\min\{Z_d,\delta(1+\delta)m_{\rm after}\}$, using the old excess
and stable-potential bounds before telescoping.
Indeed, after removal there are $bm_{\rm after}$ units in total and
capacity only increases during repair, so $A_d\le bm_{\rm after}$.
The final state satisfies $\Phi_{\rm after}\le\delta^2bm_{\rm after}$,
while $\Phi_{\rm before}\ge0$. Consequently
\[
 \delta A_d+\Phi_{\rm after}-\Phi_{\rm before}
 \le\delta(1+\delta)bm_{\rm after}.
\]
Both this inequality and the $bZ_d$ bound concern the same deletion
and the same endpoint potentials, so their right-hand sides may be
minimized before summing. An empty result has $m_{\rm after}=0$.

\subsection{A Degree-Tail Specialization}
\label{sec:degree-tail-specialization}
The exposure bound gives a more explicit guarantee when few vertices
have degree near the repair threshold. Consider an internal deletion
whose old witness has $M\ge2$ edges and $k$ vertices. Set
$M'=M-1$, $s=s_{\rm old}=\alpha bM/k$, and
$s'=\alpha bM'/k$. Let
\[
 H_d=\{v:b\deg_{\rm after}(v)>\beta s'-1\}.
\]
Normalization ensures $s_0\ge s'$, so every coordinate contributing to
$Z_d$ lies in $H_d$. Moreover,
$s/s_0-1\le1/M'$ and $\widetilde q_v\le s$. Hence
\begin{equation}
 Z_d\le\frac{s+2}{bM'}|H_d|
 \le2(1+2\epsilon)\frac{|H_d|}{k}.
 \label{eq:surviving-degree-tail}
\end{equation}
The last step uses $M/M'\le2$ and $1/s\le\delta/2$.

\begin{corollary}[Uniform hypergraphs with a small degree tail]
\label[corollary]{cor:uniform-degree-tail}
Consider an $r$-uniform update sequence satisfying the assumptions of
\Cref{lem:transport-boundary}. If $|H_d|\le\eta|C_d|$ for every
internal deletion that leaves an edge in its old witness, then
\[
 A=O\!\left(bN_++\frac{\Phi_0}{\epsilon}
                +\frac{b\eta}{\epsilon}N_-\right),
\]
where $N_-$ is the number of deletions and $C_d$ is the pre-deletion
witness. In particular, the result applies to ordinary graphs ($r=2$).
\end{corollary}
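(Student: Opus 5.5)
The plan is to start from \Cref{lem:transport-boundary}, which gives
\[
 A\le3bN_++\frac{\Phi_0-\Phi_f+b\sum_dZ_d}{\delta}.
\]
Drop $-\Phi_f\le0$ and use $1/\delta=(1+\epsilon)/\epsilon\le2/\epsilon$. The first two terms then become $O(bN_++\Phi_0/\epsilon)$. What remains is to show
\[
 \sum_dZ_d=O(\eta N_-),
\]
because this makes the last term $O(b\eta N_-/\epsilon)$.

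We bound $Z_d$ deletion by deletion, splitting into cases.

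\textbf{External deletions.} Suppose $e\nsubseteq C_d$, or the graph becomes empty. The witness density is unchanged, and normalization can only raise it. Hence $s_0\ge s_{\rm old}$ and $Z_d=0$.

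\textbf{Internal deletions that leave an edge in the old witness.} Here $M\ge2$. The degree-tail estimate \Cref{eq:surviving-degree-tail} gives
\[
 Z_d\le2(1+2\epsilon)\frac{|H_d|}{k}.
\]
With $k=|C_d|$ and the hypothesis $|H_d|\le\eta|C_d|$, this yields $Z_d\le2(1+2\epsilon)\eta\le6\eta$, since $\epsilon\le1$.

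\textbf{Destructive deletions.} These remove the sole induced edge of the old witness, so $M=1$. The witness was normalized, so \Cref{lem:transport-boundary} applies: in the $r$-uniform case this deletion contributes zero exposure. That lemma's argument shows the old density is exactly $1/r$, and any fallback edge restores at least this density, so $s_0\ge s_{\rm old}$.

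\textbf{Main obstacle: coverage of the cases.} I must confirm that these cases exhaust all deletions, and in particular that an internal deletion with $M\ge2$ always falls under the degree-tail derivation. The setup of \Cref{eq:surviving-degree-tail} assumes the witness keeps its vertices and that $s_0\ge s'$. Both hold, because normalization triggers only when density drops below $1/r$. If it does trigger, the replacement has density at least $1/r$. Also $M'/k\ge 1/k$, and I must check that this is compatible with $s_0\ge s'$. The paper's statement ``normalization ensures $s_0\ge s'$'' covers this, and I would cite it.

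Summing over deletions gives $\sum_dZ_d\le6\eta N_-$, which completes the bound. For ordinary graphs, take $r=2$, which is $2$-uniform, so the same argument applies verbatim.
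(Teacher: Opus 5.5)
Your proposal is correct and follows the paper's proof step for step. You drop $\Phi_f$ from \Cref{lem:transport-boundary}, give external and emptying deletions $Z_d=0$, and invoke the uniform witness-destruction clause of \Cref{lem:transport-boundary} for $M=1$. For $M\ge2$ you apply \Cref{eq:surviving-degree-tail} with $|H_d|\le\eta k$ to get $Z_d=O(\eta)$. Your ``main obstacle'' is not a real gap: either normalization does not fire, so $s_0=s'$, or it fires only because $M'/k<1/r$ and installs a witness of density at least $1/r$, giving $s_0>s'$.
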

For an empty start and $b=\Theta(r/\epsilon)$, the bound is
$O((r/\epsilon)N_+ +(r\eta/\epsilon^2)N_-)$ units of transport.
The parameter $\eta$ measures the size of the relevant degree tail
relative to the witness, so this conclusion is useful when that ratio
stays small throughout the sequence.
\begin{proof}
External deletions leave the witness density unchanged. In a uniform
hypergraph, destruction of a normalized one-edge witness has zero
exposure by \Cref{lem:transport-boundary}. For every remaining deletion,
\Cref{eq:surviving-degree-tail} gives $Z_d=O(\eta)$.
Substitute their sum in \Cref{lem:transport-boundary}, drop $\Phi_f\ge0$,
and use $1/\delta=(1+\epsilon)/\epsilon=O(1/\epsilon)$.
\end{proof}
For mixed ranks, the same bound retains the additional term
$(b/\delta)\sum_{d:M_d=1}Z_d$ for witness destruction.
The degree-tail condition permits many low-degree vertices; it concerns
the degree threshold at each deletion and need not follow from a fitted
power-law exponent. This is a transport bound; search and endpoint costs
are incorporated in \Cref{sec:total-maintenance-work}.

\section{Search Counts and Complete Time Accounting}
\label{sec:maintenance-details}
We first bound the number of searches over an update sequence, then add
the costs of normalization, heap maintenance, and occasional peeling.
Throughout this subsection, $h$ bounds the high-load region at every
search and $\mathcal I=I$ denotes the active incidence volume.

\paragraph{Making the search count explicit.}
Start from an empty state or a normalized certified state, and let
$\bar\rho\ge1/r$ bound the optimum over all states, including the initial
state. Before an update, stability gives $q_v\le\alpha b\bar\rho$.
An insertion adds only $b$ units, while a deletion decreases loads.
During repair, a transfer decreases the source load, leaves intermediate
loads unchanged, and keeps its destination at most $c\le\alpha b\bar\rho$.
Thus every search has $Q\le\alpha b\bar\rho+b$.
For an internal deletion let $M_d,k_d$
be the old witness's edge count and size. Define
\[
V_a=\sum_{M_d\ge2}\left\lceil\frac{ab}{k_d}\right\rceil
       +\sum_{M_d=1}\left(\left\lfloor\frac{ab}{k_d}\right\rfloor
                          -\left\lfloor\frac{ab}{r}\right\rfloor\right).
\]
The quantity $V_a$ bounds downward changes of the normalized integer
target, assigning an empty state the artificial baseline $\lfloor ab/r\rfloor$.
For nonempty states this target is $\lfloor abL\rfloor$. The first sum
covers a witness that loses one of several edges; the second covers loss
of its sole edge and subsequent normalization or an empty state.
For transport we need only decreases before nonempty repairs, since an
emptying update requires no transport. We use $V_\alpha$ for those outer
capacity decreases and $V_\gamma$ for the destination target.
\begin{lemma}[Transport and failed-search counts]
\label[lemma]{lem:search-counts}
For \method{}, if the high-load region has at most $h$
vertices at every search, then
\[
 A\le bN_++hV_\alpha,
\]
\[
F\le\lfloor\gamma b\bar\rho\rfloor-\lfloor\gamma b/r\rfloor+V_\gamma.
\]
\end{lemma}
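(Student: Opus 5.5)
Both bounds come from integer-valued quantities that each successful or failed search must move, plus a count of how much updates can push those quantities back.

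\emph{Transport bound.} I would track the integer excess $X_c=\sum_v(q_v-c)_+$ across the whole sequence. By \Cref{lem:transfer-progress}, every successful transfer of $a$ units lowers $X_c$ by exactly $a$. Witness replacements and refreshes only raise $c$, and normalization never lowers the density, so neither can raise $X_c$. Each update begins from a stable state with $X_c=0$.

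Hence transport during an update is at most the excess present when the repair loop starts. For an insertion this is at most $b$, since $c$ does not drop and only $b$ units are added. An external deletion creates no excess. An emptying deletion needs no transport.

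For an internal deletion that leaves a nonempty graph, loads only drop. The new excess is therefore at most the number of vertices with $\widetilde q_v>c_{\rm new}$ times the capacity drop $c_{\rm old}-c_{\rm new}$. Every such vertex has $q_v\ge c_{\rm new}\ge t$ and $q_v\le b\deg(v)$, so it lies in the high-load region of the first search of that update, which has at most $h$ vertices. (If no search occurs, there is no transport.)

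It remains to show that $c_{\rm old}-c_{\rm new}$ is at most the matching summand of $V_\alpha$.
\begin{itemize}
\item If $M_d\ge2$, the witness loses one edge while keeping $k_d$ vertices, and normalization and later steps only raise $L$. The capacity drop is then $\lfloor\alpha bM/k\rfloor-\lfloor\alpha b(M-1)/k\rfloor\le\lceil\alpha b/k_d\rceil$.
\item If $M_d=1$, the old capacity is $\lfloor\alpha b/k_d\rfloor$. The normalized fallback has $L\ge1/r$, so its capacity is at least $\lfloor\alpha b/r\rfloor$.
\end{itemize}
Summing over updates gives $A\le bN_++hV_\alpha$.

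\emph{Failed-search count.} By \Cref{eq:failure-threshold-growth} and the proof of \Cref{thm:certificate}, each failure raises the integer inner threshold $t=\lfloor\gamma bL\rfloor$ by at least one. The threshold never decreases except on internal deletions, or on emptying updates if we use the artificial baseline $\lfloor\gamma b/r\rfloor$ for the empty state. Insertions, refreshes and normalizations never decrease it.

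This is a telescoping argument over a potential bounded between $\lfloor\gamma b/r\rfloor$ and $\lfloor\gamma b\bar\rho\rfloor$. Total upward movement equals the final value minus the initial value plus the total downward movement. The initial value is at least $\lfloor\gamma b/r\rfloor$, whether the start is empty or normalized. The final value is at most $\lfloor\gamma b\bar\rho\rfloor$, since $L\le\rho^*\le\bar\rho$. The total downward movement is bounded by $V_\gamma$ using the same case analysis as above with $\gamma$ in place of $\alpha$. Hence
\[
F\le\lfloor\gamma b\bar\rho\rfloor-\lfloor\gamma b/r\rfloor+V_\gamma.
\]

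\emph{Main obstacle.} The delicate step is bounding the post-deletion excess by $h(c_{\rm old}-c_{\rm new})$. Two things must be checked. First, the vertices counted must be in the high-load region at the time of a search, which holds because $\widetilde q_v>c_{\rm new}\ge t$ forces $b\deg(v)\ge t$. Second, the floor arithmetic must give exactly $\lceil\alpha b/k_d\rceil$ in the $M_d\ge2$ case and the floor difference in the $M_d=1$ case, including the subcase where the deletion empties the graph and a later insertion restarts from the baseline.
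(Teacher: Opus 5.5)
Your argument follows the paper's proof. Deletion transport is bounded by the excess present at the start of repair, which is at most $h$ times the capacity drop. That drop is controlled by the floor-difference cases defining $V_\alpha$. The count of $F$ telescopes the integer inner threshold, using the empty-state baseline $\lfloor\gamma b/r\rfloor$ and charging the emptying deletion to the $M_d=1$ term.

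One step of the transport bound fails as written. You measure the excess on entry to the repair loop, with $c_{\rm new}$ the post-normalization capacity. You then place every vertex with $\widetilde q_v>c_{\rm new}$ in the high-load region of the first search, using $c_{\rm new}\ge t$. However, the budgeted peeling refresh is tested at the top of the loop, before any search, and its debt $D$ carries over from earlier updates. If it fires, it can install a much denser witness. The first search then uses $t=\lfloor\gamma bL_{\rm start}\rfloor$, which may exceed $c_{\rm new}$. Vertices with $c_{\rm new}<\widetilde q_v\le b\deg(v)<t$ are in your count but are not covered by $h$.

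The fix is the paper's choice of reference point. No transfer happens before the first search, and a refresh only raises capacity. So transport is bounded by the excess at the first search, measured against the capacity $c_{\rm start}$ in force there. Its overloaded vertices satisfy $\widetilde q_v>c_{\rm start}\ge t$ and therefore lie in the high-load region. Each contributes at most $(c_{\rm old}-c_{\rm start})_+\le(c_{\rm old}-c_{\rm new})_+$, and your floor analysis then goes through unchanged. The failed-search count is unaffected by this issue.
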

The first bound charges successful transport to inserted mass and
capacity decreases. The second charges failed searches to upward
threshold progress: each failure increases the threshold, and only
deletions can make that progress need to be repeated. Together they
supply search counts for \Cref{eq:hyper-aggregate-search}.
\begin{proof}[Proof of \Cref{lem:search-counts}]
By \Cref{lem:insertion-potential}, insertion transport is at most $b$.
For a deletion requiring repair, let $c_{\rm old}$ be its old capacity
and $c_{\rm start}$ the capacity at its first search, after normalization
and any initial peeling. Raw removal only decreases loads, so every
load is at most $c_{\rm old}$. At this first search, every overloaded
vertex belongs to its high-load region, which contains at most $h$
vertices. The initial excess is therefore at most
$h(c_{\rm old}-c_{\rm start})_+$. During repair, capacities only increase
and each successful path reduces excess by its amount. This initial
excess bounds all transport during the deletion.

If $M_d\ge2$, the old witness retains density $(M_d-1)/k_d$ before
normalization. Subsequent normalization can only improve it. For any
$a>0$, the threshold decrease is consequently at most
\begin{align*}
 \left\lfloor\frac{abM_d}{k_d}\right\rfloor
 -\left\lfloor\frac{ab(M_d-1)}{k_d}\right\rfloor
 \le\left\lceil\frac{ab}{k_d}\right\rceil.
\end{align*}
This follows from $\lfloor x\rfloor-\lfloor x-y\rfloor\le\lceil y\rceil$.
If $M_d=1$ and an edge survives, normalization gives density at least
$1/r$, so the decrease is at most
$\lfloor ab/k_d\rfloor-\lfloor ab/r\rfloor$.
This quantity is nonnegative since the old normalized witness has
$k_d\le r$. External deletions do not decrease the witness density,
and deletion of the last active edge requires no transport. Taking
$a=\alpha$ and summing proves the bound on $A$.

For failed searches, define $t_*=\lfloor\gamma b/r\rfloor$ and track
$\tau=t$ in nonempty states and $\tau=t_*$ in empty states.
Normalization always gives $t\ge t_*$. Every failure increases this
integer target by at least one (\Cref{eq:failure-threshold-growth}).
Insertions and density-improving witness replacements cannot decrease it.
The same floor-difference calculation with $a=\gamma$ bounds its total
downward variation by $V_\gamma$. In particular, on an emptying update
the previous state has one active edge, and its target drops by at most
$\lfloor\gamma b/k_d\rfloor-t_*$; the next insertion cannot take it
below $t_*$. Thus empty states introduce no additional charge.
Since $t_*\le\tau\le\lfloor\gamma b\bar\rho\rfloor$, summing target
increases and decreases gives the claimed bound on $F$.
\end{proof}

\Cref{lem:transport-boundary} gives an alternative bound on $A$ for
reserve repair; the smaller bound can be substituted into
\Cref{eq:hyper-aggregate-search}. A fixed-size high-degree core with all other vertices below the
actual repair threshold is a sufficient structural regime. Crossing edges
are included in the degrees. Snapshot density separation alone does not
establish this condition.

\paragraph{Fallback and maintenance costs.}
Failed searches and peeling supply known induced-edge counts.
For an unknown-count fallback $S$, a separate edge stamp deduplicates
encounters in its positive incidence lists before testing containment.
The test stops at the first endpoint outside $S$. With analogous
counters $P_S,E_S,J_S$,
\[
 P_S\le\sum_{v\in S}q_v,\qquad E_S\le P_S,\qquad
 J_S\le\min\{r,|S|+1\}E_S.
\]
Every internal edge has a positive owner in $S$ and is counted once.
The fallback edge itself is fully tested, so $J_S\ge|S|$ also pays
for copying and marking the new witness.
We retain
$Z_{\rm norm}=\sum_S(P_S+J_S)$, as mixed-rank fallback need not cost
$O(b)$.

Let $R_{\rm up}=\sum_{\rm updates}|e|$, $K$ count successful searches,
$\Lambda=\log(2+n+\mathcal I_{\max})$, $D_0$ be inherited refresh debt,
and $\Pi_0$ be initial container-growth credit. For each successful path,
$h_{\rm path}$ is the size of $H$ at its search. With compact or
preallocated edge IDs, the single-path implementation has the following
detailed accounting. Its first line pays for inherited state credits and edge
updates, its second for search inspections, fallbacks, and heap changes,
its third for locating receiving shares, and its last for peeling:
\begin{equation}
\begin{aligned}
O\bigl(&\Pi_0+|C_0|+
 R_{\rm up}[\log(r+1)+\log(b+1)+\log(n+1)]\\
 &+P_{\rm tot}+J_{\rm tot}+Z_{\rm norm}+K\log(n+1)\\
 &+\sum_{\rm paths}[(\ell-1)\log(h_{\rm path}+1)+\log(r+1)]\\
 &+(P_{\rm tot}+D_0)\Lambda\bigr).
\end{aligned}
\label{eq:total-maintenance-work}
\end{equation}
We justify each term in \Cref{eq:total-maintenance-work} below.

\paragraph{Edge updates and path transfers.}
For an updated edge of size $s_e$, sorting and locating endpoint slots
cost $O(s_e\log(r+1))$. Discrete water filling searches at most $b+1$
integer levels, scanning the $s_e$ endpoints in each binary-search step,
and costs $O(s_e\log(b+1))$. Heap changes contribute
$O(s_e\log(n+1))$. Summing $s_e$ gives the $R_{\rm up}$ term.
Positive-incidence lists use inverse positions, so changing one endpoint
share costs amortized constant time apart from its slot lookup.
The initial container-growth credit $\Pi_0$ pays any inherited allocation
cost not charged to updates in the sequence.

Search exploration contributes $P_{\rm tot}+J_{\rm tot}$. Its parent
pointers already identify donor slots. A receiving slot is found by
binary search in the sorted endpoints.
As in the proof of \Cref{lem:local-search-work}, every
nonterminal path edge was fully expanded before a destination was found,
so all of its endpoints lie in the high-load region. Only the terminal
edge can have rank exceeding $h_{\rm path}$, giving the displayed
sum of lookup costs. Intermediate load changes
cancel, so a successful path changes only two heap keys, contributing
$O(\log(n+1))$ per success. Traversing the parent path itself is paid
by the search that discovers it.

\paragraph{Witness installation and budgeted peeling.}
A failed search already supplies its reached vertices and induced-edge
count. A fallback costs $P_S+J_S$, and a peeling pass supplies its new
witness as part of that pass. The total cost of clearing old membership
marks is at most the initial $|C_0|$ plus the total sizes of subsequently
installed witnesses. These sizes are charged to the corresponding
searches, fallbacks, or peeling passes.

At a peeling refresh $j$, let $D_j$ be the accumulated positive-scan
debt, and let $m_j,\mathcal I_j$ describe the current graph.
The trigger requires $D_j>4(n+rm_j)$. Since
$\mathcal I_j\le rm_j$, a pass costs
\[
 O((n+\mathcal I_j)\log(2+n+\mathcal I_j))=O(D_j\Lambda).
\]
Debt is reset after each pass, so the charged scan intervals are disjoint
and $\sum_jD_j\le P_{\rm tot}+D_0$. This proves the final term of
\Cref{eq:total-maintenance-work}, even if the graph size changes between
refreshes. It also includes constructing and marking the new peeling
witness.

\paragraph{Construction, reporting, and space.}
An empty state takes $\Theta(n)$ time to construct. For a nonempty
initial graph, we also charge the work of inserting its initial edges.
Query materialization adds $\Theta(1+|C|)$ per reported set.
The bounds use representable unit-cost integer arithmetic and compact
or preallocated edge identifiers, with allocation charged to the update
sequence. A compacted state occupies $O(n+I)$ words. The
prototype retains deleted edge vectors and allocated container capacities,
so its storage also depends on the update history. Independent correctness
checking is performed outside the timed updates.

\paragraph{Deriving the summary guarantees.}
\begin{proof}[Proof of \Cref{thm:amortized-update,thm:main-guarantee}]
Starting empty gives $C_0=\emptyset$, $D_0=0$, and no inherited
container-growth credit; initialization costs $O(n)$. Write
$S_{\rm scan}=P_{\rm tot}+J_{\rm tot}$. Every successful search inspects
an incidence, so $K\le S_{\rm scan}$. The incidences along a returned
path have already been examined by that search, so the total number of
path steps is also at most $S_{\rm scan}$.
The nonempty update sequence gives $m_*\ge1$.
Since $I_*\le r m_*$ and $b=\lceil2r/\epsilon\rceil$,
all logarithms in \Cref{eq:total-maintenance-work} are
$O(\log((n+r m_*)/\epsilon))$. Substitution yields
\Cref{eq:readable-total-work}, including the cost of peeling.

For the coarse bound, a search inspects at most $I_*$ positive incidences
and at most $I_*$ endpoints: each vertex is expanded once and each edge
is stamped after its first expansion. Its inspection cost is at most
$2I_*$. An insertion transports at most $b$ units. During a deletion,
the initial excess is at most the remaining total mass $bm_*$;
capacities only increase during repair, so deletion transport is at most
$bm_*$. Integer augmentation therefore bounds successful searches per
update by $bm_*$.

Each failed search increases the inner threshold by a factor of at least
$1+\epsilon/4$ (\Cref{eq:failure-threshold-growth}). The initial threshold
is at least $\lfloor\gamma b/r\rfloor$ and the final one is at most
$\lfloor\gamma bm_*\rfloor$. Hence there are
$O(\epsilon^{-1}\log(2+rm_*))$ failures per update. Consequently,
\[
 S_{\rm scan}=O\!\left(NI_*\left[bm_*
                +\epsilon^{-1}\log(2+rm_*)\right]\right).
\]
There is at most one fallback normalization per update. Its stamped
incidence and endpoint scans each cost at most $I_*$, giving
$Z_{\rm norm}=O(NI_*)$, while $R_{\rm up}\le Nr$.
Finally, $b=O(r/\epsilon)$ and $\log(2+rm_*)=O(rm_*)$.
Substitution in \Cref{eq:readable-total-work} gives
\begin{equation}
 T_{\rm maint}=O\!\left(n+N\frac{r m_* I_*}{\epsilon}
 \log\frac{n+r m_*}{\epsilon}\right).
 \label{eq:general-update-bound}
\end{equation}
Excluding initialization and dividing by $N$ proves
\Cref{thm:amortized-update}. The approximation follows from
\Cref{thm:certificate}; the density and witness are stored explicitly,
giving the stated query costs.
\end{proof}

The main text proves \Cref{lem:combined-update-time} by substituting
the locality and transport bounds into \Cref{eq:readable-total-work}.

%% file: sections/cq_reconstruction.tex
\begin{revision}
\section{CQ Reconstruction and Evaluation Protocol}
\label{app:cq-reconstruction}
Our \texttt{CQ} baseline reconstructs the basic amortized graph core,
nested label buckets, and shared-copy representation from a preliminary
version of Chekuri et al.~\cite{chekuri2024}, using the hypergraph
extension described in its Section~6. This preliminary version is linked
in \Cref{sec:experiment-setting}. Each logical insertion or deletion
executes $K$ unit-copy operations; copies with the same hyperedge and
head share a multiplicity and endpoint labels. There are no bulk
transfers or empirically tuned copy counts.

\paragraph{Fixed parameters and extraction.}
Let $n$ be the full stream's vertex-universe size and $r$ its maximum
hyperedge rank. For requested $0<\epsilon\le1$, we set
\begin{align*}
q&=(1+\epsilon)^{1/4},
& M&=\left\lceil\frac{\log\max\{2,n\}}{\log q}\right\rceil,\\
a&=\exp\!\left(\frac{\log(1+\epsilon)}{8M}\right)-1,
& P&=(1+a)^2,\\
\beta&=3+3a,
& \Delta&=(1+\epsilon)^{-1/2}-(1+\epsilon)^{-1},
\end{align*}
and $K=\lceil\beta Mr/(q\Delta)\rceil+1$.
These explicit constants are our conservative choice, not constants
specified in the source paper.
At quiescence, the label invariants imply
$d(h)\le P\,d(u)+3+2a\le P\,d(u)+\beta$
for each copy head $h$ and every endpoint $u$.
With $\mu=\max_v d(v)$, start at $t_0=\mu$ and use
$t_{i+1}=(t_i-\beta)/P$ and $S_i=\{v:d(v)\ge t_i\}$,
including all threshold ties.
We return $S_{i+1}$ at the first index with
$|S_{i+1}|\le q|S_i|$ and compute its fully induced logical-edge density.
For a nonempty snapshot satisfying these invariants,
\[
\rho(S_{i+1})\ge
\frac{\rho^*}{qP^M}-\frac{\beta M}{qK}
\ge\frac{\rho^*}{1+\epsilon},
\]
using $\rho^*\ge1/r$; empty snapshots return the empty set.

\paragraph{Deletion completion and implementation scope.}
Section~6 of that preliminary version does not specify hypergraph deletion pseudocode.
Our completion selects a global minimum-load endpoint for a unit
reorientation, relabels the destination group, and schedules both changed
endpoints. A triggering endpoint whose condition remains violated stays
on the worklist; residual source-group copies retain their labels.
We have not established general termination or the original amortized
update-time bound for this worklist. We do not implement the deamortized
or full worst-case construction.
Direct group-specific endpoint labels and links require
$O(\sum_{e\in E}|e|^2)$ worst-case live storage, in addition to the vertex
state and registry capacity determined by peak simultaneous groups.
Retired group identifiers are reused.
Extraction scans at most $M$ thresholds and the active edge incidences,
taking $O(Mn+p)$ time for $p=\sum_{e\in E}|e|$.
Thus, we do not claim the original algorithm's time or space bounds.

\paragraph{Result admission.}
Each reported \texttt{CQ} mean requires three complete repetitions,
with the entire update stream and every scheduled query processed.
Independent replay checks the returned sets' induced densities against
exact snapshot optima and the requested rational accuracy.
Passing prefixes and single
decision trials do not constitute three-run results.
The 1,800-second \texttt{CQ} wall limit includes input loading,
invariant checks, and allocation output; these are excluded from running
time. We sample working-set and private-memory use against a 24\,GiB
limit rather than impose a hard operating-system cap.
Running-time means follow the reporting criteria in
\Cref{sec:experiment-setting}.
\end{revision}